\documentclass[10pts,conference]{IEEEtran}
%%%%%%%%%%%%%%%%%%%%%%%%%%%%%%%%%%%%%%%%%%%%%%%%%
%% INFOCOM 2013 addition:
%%%%%%%%%%%%%%%%%%%%%%%%%%%%%%%%%%%%%%%%%%%%%%%%%
\makeatletter
\def\ps@headings{%
\def\@oddhead{\mbox{}\scriptsize\rightmark \hfil \thepage}%
\def\@evenhead{\scriptsize\thepage \hfil \leftmark\mbox{}}%
\def\@oddfoot{}%
\def\@evenfoot{}}
\makeatother
\pagestyle{headings} 

%%%%%%%%%%%%%%%%%%%%%%%%%%%%%%%%%%%%%%%%%%%%%%%%
% some stuff we use
%%%%%%%%%%%%%%%%%%%%%%%%%%%%%%%%%%%%%%%%%%%%%%%%
\usepackage{graphicx}%\usepackage{flushend}
\usepackage{amssymb,color}
\usepackage[cmex10]{amsmath}
\usepackage{algorithm}
\usepackage[font=small]{caption}
\usepackage{bbm,bm}\usepackage{comment,subfigure}
%%%%%%%%%%%%%%%%%%%%%%%%%%%%%%%%%%%%%%%%%%%%%%%%
% Definitions and commands we need
%%%%%%%%%%%%%%%%%%%%%%%%%%%%%%%%%%%%%%%%%%%%%%%%
%% The amsthm package provides extended theorem environments
%% \usepackage{amsthm}
%\usepackage{showkeys}

\newtheorem{cor}{Corollary}

\newtheorem{lemma}{Lemma}

\newtheorem{thm}{Theorem}
\newtheorem{remark}{Remark}
\newcommand{\fdp}[1]{\textcolor{black}{#1}}
\newcommand{\dm}[1]{\textcolor{black}{#1}}

\def\Xps{X_{ps}}

\def\Xpu{X_{pu}}
\def\Lps{\lambda_{ps}}
\def\Lpu{\lambda_{pu}}
\def\L{\lambda}

\newcommand{\bpi}{{\bm{\pi}}}
\newcommand{\balpha}{\bm{\alpha}}

\newcommand{\One}{\mathbbm{1}}
\usepackage{flushend}

\IEEEoverridecommandlockouts
\pagestyle{plain}

\begin{document}

%\title{Games, Popularity and Wardrop Equilibria in Content Delivery Networks}
%\title{Game Models of Video Popularity in User Generated Content Systems}
\title{Emergence of Equilibria from Individual Strategies in Online Content Diffusion}
\author{Eitan Altman$^\star$\thanks{$^\star$INRIA B.P.93, 2004 Route des Lucioles, 06902 Sophia-Antipolis, Cedex, FRANCE, email: {\tt\small eitan.altman@sophia.inria.fr}}, Francesco De Pellegrini$^{\diamond}$\thanks{$^{\diamond}$CREATE-NET, via Alla Cascata 56 c, 38100 Trento, ITALY, email: {\tt\small \{fdepellegrini,dmiorandi\}@create-net.org}}, Rachid El-Azouzi$^\dagger$\thanks{$^\dagger$CERI/LIA, University of Avignon, 74 rue Louis Pasteur, 84029 AVIGNON Cedex 1, email: {\tt\small \{rachid.elazouzi, tania.altman\}@univ-avignon.fr}}, Daniele Miorandi$^{\diamond}$ and Tania Jimenez$^\dagger$}

\maketitle
\begin{abstract}
Social scientists have observed that human behavior in society can often
be modeled as corresponding to a threshold type policy. A new behavior
would propagate by a procedure in which an individual adopts the new
behavior if the fraction of his neighbors or friends having adopted
the new behavior exceeds some threshold. In this paper we study 
the question of whether the emergence of threshold policies may be
modeled as a result of some rational process which would describe the behavior
of non-cooperative rational members of some social network.  We focus
on situations in which individuals take the decision whether to access
or not some content, based on the \fdp{number of views that the} content has.
Our analysis aims at understanding not only the behavior of individuals,
but also the way in which information about the quality of a given content can be
deduced from view counts when only part of the viewers that
access the content are informed about its quality.
In this paper we present a game formulation for the behavior
of individuals using a meanfield model: the number of
individuals is approximated by a continuum of atomless players
and for which the Wardrop equilibrium is the solution concept.
We derive conditions on the problem's parameters that result indeed
in the emergence of threshold equilibria policies. But we also
identify some parameters in which other structures are obtained
for the equilibrium behavior of individuals.
\end{abstract}
\begin{IEEEkeywords}
User-generated content, Complex Systems, Video popularity, Game theory, Wardrop equilibria
\end{IEEEkeywords}

%%%%%%%%%%%%%%%%%%%%%%%%%%%%%%%%%%%%%%%%%%%%%%%%%%%%%%%%%%%%%%%%%%%%%%%%%%%%%%%%%%%%%%%%%%%%%%%%%%%%%%%%%%%%%%%%%%%%%%%%
%%%%%%%%%%%%%%%%%%%%%%%%%%%%%%%%%%%%%%%%%%%%%%%%%%%%%%%%%%%%%%%%%%%%%%

\section{Introduction}

%%%%%%%%%%%%%%%%%%%%%%%%%%%%%%%%%%%%%%%%%%%%%%%%%%%%%%%%%%%%%%%%%%%%%%
%%%%%%%%%%%%%%%%%%%%%%%%%%%%%%%%%%%%%%%%%%%%%%%%%%%%%%%%%%%%%%%%%%%%%%%%%%%%%%%%%%%%%%%%%%%%%%%%%%%%%%%%%%%%%%%%%%%%%%%%

Online media constitute currently the largest share of Internet
traffic. A large part of such traffic is generated by platforms that
deliver user-generated content (UGC). This includes, among the other
ones, YouTube and Vimeo for videos, Flickr and Instagram for images 
and all social networking platforms.

Among such services, a prominent role is played by YouTube. Founded in $2005$ 
by Chad Hurley, Steve Chen and Jawed Karim and acquired in $2006$ by Google, 
YouTube scored in $2011$ more than $1$ trillion views (or, alternatively, 
an average of $140$ video views for every person on Earth), with more 
than $3$ billion hours of video watched every month and $72$ hours 
of video uploaded every minute by YouTube's users\footnote{{\tt http://www.youtube.com/t/press\_statistics/}}. 

Of course, not all videos posted on YouTube are equal. The key aspect
is their ``popularity'', broadly defined as the number of views they
score (also referred to as {\em viewcount}). This is relevant from a 
twofold perspective. On the one hand, more popular content generates more traffic, so understanding
popularity has a direct impact on caching and replication strategy
that the provider should adopt. On the other one, popularity has a direct economic impact. Indeed,
popularity or viewcount are often directly related to click-through
rates of linked advertisements, which constitute the basis of the
YouTube's business model.

Recently, a number of researchers have analysed the evolution of the
popularity of online media content~\cite{ChaUtube,crane2008viral,Gill07youtubetraffic,RatkiewiczBurstyPoP,ChatFirstStep,ChaTON},
with the aim of developing models for early-stage prediction of 
future popularity~\cite{SzaboPop}. 

Such studies have highlighted a number of phenomena that are typical of UGC
delivery. This includes the fact that a significant share of content
gets basically no views~\cite{ChaTON}, as well as the fact that popularity may see some
bursts, when content ``goes viral''~\cite{RatkiewiczBurstyPoP}. 
Also, in~\cite{SzaboPop}  the authors demonstrate that after an
initial phase, in which contents gain popularity through
advertisement and other marketing tools, the platform mechanisms to 
induce users to access contents (re-ranking mechanisms)
are main drivers of popularity.

In this paper, we address such phenomena, by developing a model, based
on game theoretical concepts and tools, for understanding how user's
behaviour drives the evolution of popularity of a given content. The
work is based on rational decision-making assumptions, whereby the
users have to decide whether to see a given content or not. This configures
as a game, where users seek to maximize 
some expected utility based on their ``perception'' of the quality of
the content\footnote{This may come, e.g., from the name of user
  who posted the content.} 
and on viewcount. However, users suffer 
also a cost for accessing contents of bad quality, i.e., waste of 
time and possibly bandwidth, batteries, etc. In particular, in the 
decision process the viewcount is used as a noisy estimator of 
the quality of a content. Interestingly, this context resembles 
closely the situation in the economic domain, where  customers of 
a firm which are uninformed do infer the quality of products 
from the length of the queue they encounter upon requesting 
firm's goods to purchase~\cite{Debo2012}. 

Extensive advertising and marketing campaigns can be used to push the 
viewcount of a given content up. And in the decision making process 
users do not know whether the viewcount has been ``pushed'' by such 
means. Also, the decisions made by different users influence the 
viewcount and consequently the decisions made by other users, a 
process which suits well the usage of game theoretical machinery. 

Specifically, we describe the conditions for the adoption of common 
behaviors in online content access. This is inspired by findings
in social science \cite{RolfeSocNet,GraSoongJEBO1986,GranovetterAJS1978}:
results there show that emerging behaviours would propagate by a 
procedure in which an individual adopts a novel behavior if the 
fraction of neighbors or friends having adopted the same behavior 
exceeds some threshold. In our context, the threshold would 
be expressed in terms of viewcount or related metric.

In the sense of game theory, users of online media represent
non-cooperative rational players connected through some social tie, 
e.g., being users of the same UGC platform. Since we consider systems composed by a very large number of users, 
the customary tool  to study the user behaviour is that of Wardrop 
equilibria~\cite{wardrop52}. In particular, we have found a number 
of conditions for which such equilibria exist and can be characterized 
analytically. Explicit conditions were found for content to stay at zero views or to become
so popular that it is makes sense for all users to access it the sooner the 
better.

%But, we should notice also that the viewcount experienced by the player depends by the strategy to 
%access the content that all remaining users adopt: to model this fact we employ the notion of Wardrop 
%equilibria. Such notion of equilibrium suites well the case of large number of players as those of a 
%UGC platform. We are able to determine conditions for such equilibria hold and we identify ranges
% of the system parameters when equilibria are possible and and when they are not. 
%In particular, in our model we find existence of conditions under which the system settles at extreme equilibria, 
%namely zero viewcount or unconditioned access. 
Furthermore, we identify, for the general case, conditions under which players 
tend to accrue around a common strategy depending on initial conditions. This is due to the 
existence  of a continuum of equilibria: the system will settle at any point very much depending on initial conditions imposed, for instance, by a set of forerunners which cause significant changes of the content popularity. Such conditions were identified in early works such as \cite{Hassin97equilibriumthreshold} in other contexts: there, the authors applied threshold type Nash equilibrium strategies in which one purchases priority if and only if upon arrival the queue size is larger than some threshold value. Key motivation in \cite{Hassin97equilibriumthreshold} is predictability and control of purchase priority. What motivates this work is predictability and control of online content access.

\paragraph*{Novel contribution} 
%%%%%%%%%%%%%%%%%%%%%%%%%%%%%%%%%%%%%%%%%%%%%%%%%%%%%%%%%%%%%%%%%%%%%%%%%%%%%%%%
in this paper, we move away from the classical analysis of social networks in the spirit of \cite{SzaboPop,RatkiewiczBurstyPoP,ChatFirstStep,ChaUtube}: instead, we provide a first analysis based on games. The aim of this paper is to provide a novel perspective where contents compete to gain popularity and are subject to the effect of user's choice. To the best of the authors' knowledge, this is the first attempt so far to describe content popularity in UGC systems using game theoretical tools. 

The remainder of the work is organized as follows. In Sec.~\ref{sec:model} we introduce the system model and the notation
used throughout the paper. Results for the case when plain viewcount is
used to make decisions are presented in Sec.~\ref{sec:gt1}. When
decisions account also \fdp{for a large increasing trend} of content popularity, i.e., looking
for 'hot' content, the dynamics of the game becomes different. This
case is analysed in Sec.~\ref{sec:gt2}. In Sec.~\ref{sec:gt3} we analyze the joint effect \fdp{when 
both the viewcount and its trend are both relevant to the user}. Finally, in Sec.~\ref{sec:sideinfo} we model the 
effect of side information when users have some measure of future content dynamics. 
Sec.~\ref{sec:rel} reviews the related work and Sec.~\ref{sec:concl} concludes the paper 
highlighting directions for expanding the current reach of the work.

\begin{figure}[t]
  \centering
  \subfigure[``President Obama Sings Sweet Home Chicago"]{\includegraphics[width=0.4\textwidth]{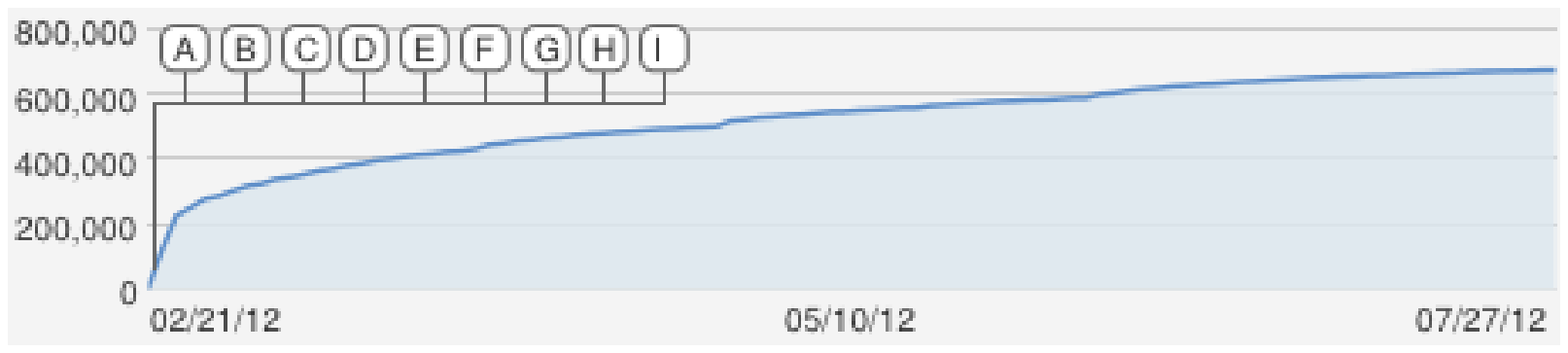}}
  \subfigure[``Chris Sharma Worlds' First 5.15'']{\includegraphics[width=0.4\textwidth]{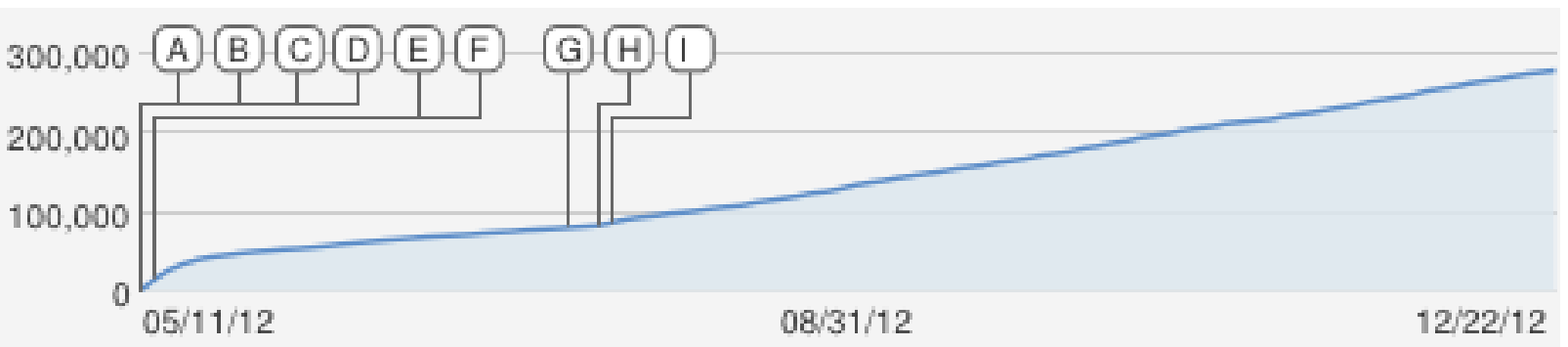}}
  \subfigure[``Montersino's Sacher Cake"]{\includegraphics[width=0.4\textwidth]{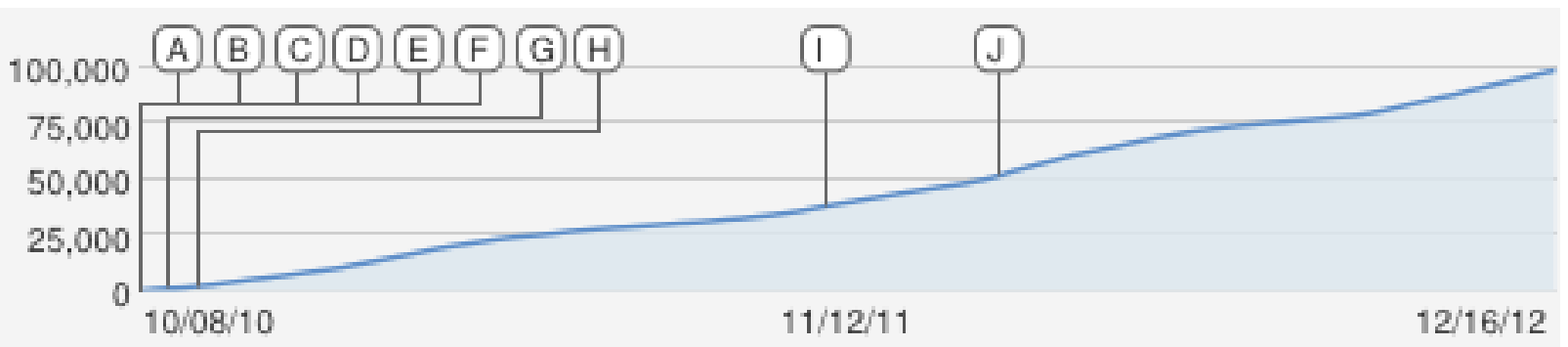}}
   \subfigure[``Shakira -- Waka-Waka"]{\includegraphics[width=0.4\textwidth]{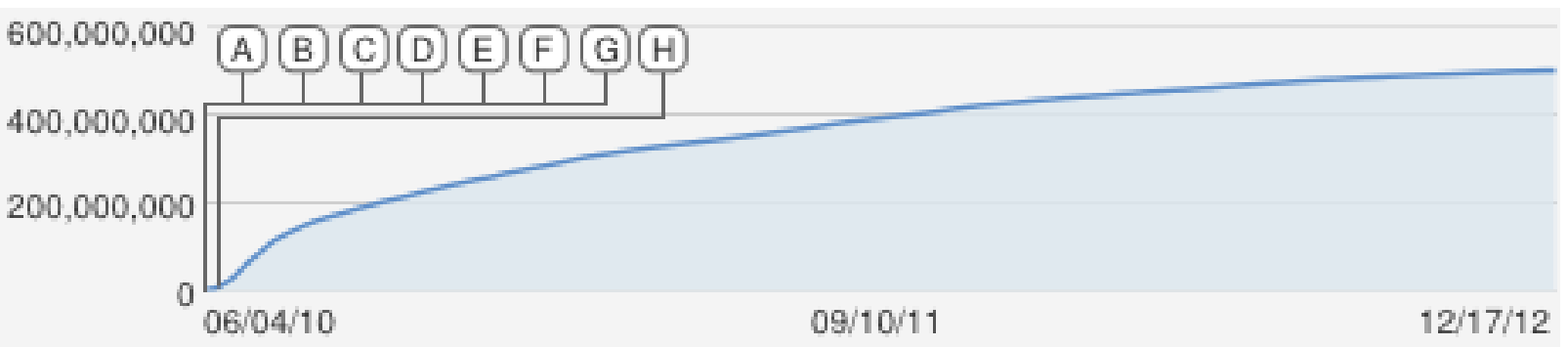}}
  \subfigure[``Bruno Mars -- Grenade"]{\includegraphics[width=0.4\textwidth]{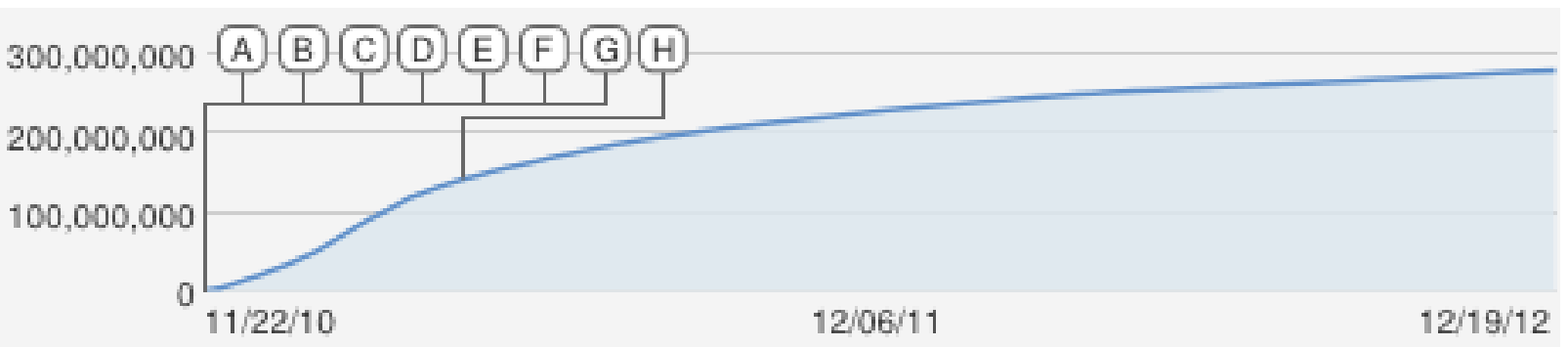}}
  \subfigure[``Adele -- Rolling in the deep"]{\includegraphics[width=0.4\textwidth]{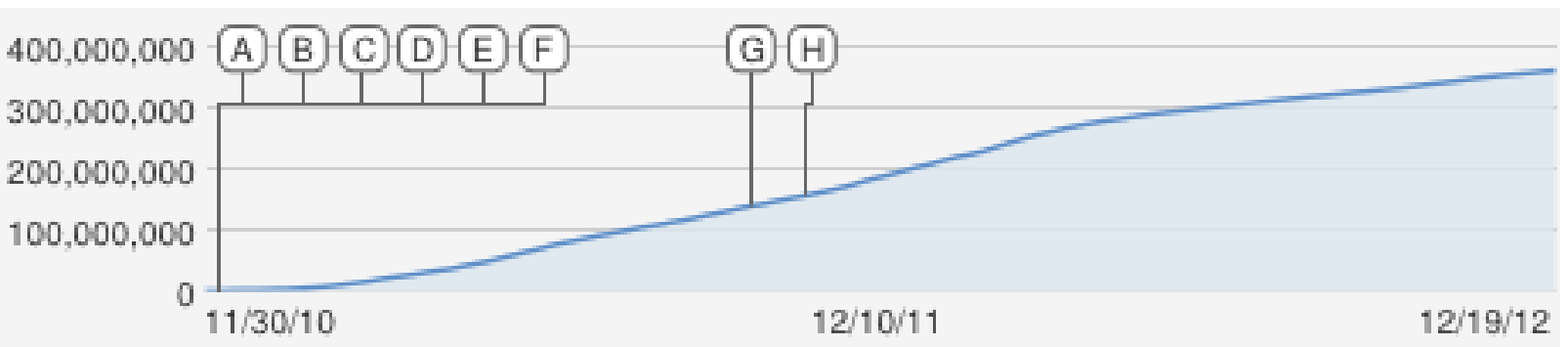}}
  \caption{\fdp{Dynamics of the viewcount for six sample videos: the push dynamics can be identified with the first part of the dynamics, where labels identify some actions that are significant for the diffusion of the video; observe for  cases a, b and c how a linear dynamics takes over in the last part of the dynamics. The labels tagging the first part of the dynamics mention specific events that identify the diffusion of the content on specific platforms or channels.}}\label{fig:pictures}
\end{figure}

%%%%%%%%%%%%%%%%%%%%%%%%%%%%%%%%%%%%%%%%%%%%%%%%%%%%%%%%%%%%%%%%%%%%%%%%%%%%%%%%%%%%%%%%%%%%%%%%%%%%%%%%%%%%%%%%%%%%%%%%
%%%%%%%%%%%%%%%%%%%%%%%%%%%%%%%%%%%%%%%%%%%%%%%%%%%%%%%%%%%%%%%%%%%%%%

\section{System Model}\label{sec:model}

%%%%%%%%%%%%%%%%%%%%%%%%%%%%%%%%%%%%%%%%%%%%%%%%%%%%%%%%%%%%%%%%%%%%%%
%%%%%%%%%%%%%%%%%%%%%%%%%%%%%%%%%%%%%%%%%%%%%%%%%%%%%%%%%%%%%%%%%%%%%%%%%%%%%%%%%%%%%%%%%%%%%%%%%%%%%%%%%%%%%%%%%%%%%%%%

% \begin{table}
% \centering
% \begin{tabular}{|l|l|}
% \hline
% {\it Symbol} & {\it Meaning}\\\hline\hline
% & \\\hline
% & \\\hline
% \end{tabular}\caption{Main notation used throughout the paper}\label{tab:notation}
% \end{table}

We consider contents made available to a user by means of YouTube or a similar platform. We denote by $\tau$ the lifetime of a content, i.e.,
the time horizon during which the content bears some interest. In general, such horizon  differs depending on the type of content: it can be typically 
of the order of weeks to months for YouTube videos or a few days for news \cite{SzaboPop}. \fdp{A possible extension to the case of  variable time horizon is the addressed in Sec.~\ref{sec:gt2}.}

We denote by $X(t)$ the viewcount attained by a given content $\theta$ at time $t$ seconds after it has been posted, for $0\leq t \leq \tau$. 

As in standard UGC platforms, there are two mechanisms that coexist and \fdp{can jointly increase the viewcount:}
\begin{itemize}
\item {\em push}: the content provider exploits some preferential channels (including paid advertisement either directly on the UGC system or via social networking platforms) to make users aware of the content and to induce them to access it. We call {\em push users} the users that access the content as a reaction to the push mechanism. 
\item {\em pull}: users find about the content through standard search and decide to access it based on the belief that the content is relevant for them. We call users accessing a content through the pull mechanism {\em pull users}.
\end{itemize}

%%%%%%%%%%%%%%%%%%%%%%%%%%%%%%%%%%%%%%%%%%%%%%%%%%%%%%%%%%%%%%%%%%%%%%%%%%%%%%%%%%%%%%%%%%%%%%%%%%%%%%%%%%%%%%%%%%%%%%%%

\fdp{In practice, many YouTube videos are subject to the push and the pull mechanisms described above such as 
the examples  that  we reported in Fig.~\ref{fig:pictures}. For instance, Fig.~\ref{fig:pictures}a, shows the dynamics of a popular video with viewcount $X \geq 675000$. The YouTube statistics associated with the video describe explicitly a series of events happening in the first part of the dynamics of $X$. For instance, the event B that appears around 02/12/2012, is precisely the event \texttt{``First embedded on: plus.google.com''} which indeed configures as a push towards a social network platform. After the initial push, such events vanish, and the rest of the dynamics appears ascribed mostly to the pull mechanism defined above, with a linear increase in the viewcount.}

\fdp{Also, some of the reported videos are representative of a specific class of online contents, which are those we will be dealing with in the rest of the paper. We can refer to those as the contents that comply to the {\em exponential-linear} model, for the sake of brevity. In particular, many such contents appear to obey to the following dynamics: after an initial exponential growth, the increase of the viewcount becomes linear. The way to interpret such a behavior can be traced to the notion of push and pull mechanisms described above: the exponential growth corresponds to actions through which the source distributes the content within a basin of target push viewers. When such basin is finite and small with respect to the content diffusion dynamics, the viewcount dynamics experiences a %typical %concave shape, and a 
saturation effect which takes over after an initial phase. However, at that stage, the access to the content is due to pull users that come across the content browsing online: they do so at random from a very large basin, so that the access rate, i.e., the viewcount increase rate, is linear.  These combined effects are visible in the case of the first two videos, i.e. Fig~\ref{fig:pictures}a and Fig~\ref{fig:pictures}b. In the case of the first video, the saturation effect is well visible, whereas in the case of the second one the linear increase following the saturation is dominating. The example in Fig~\ref{fig:pictures}c is a case where all the dynamics is linear with good approximation: as it will be clear in the following, in the exponential-linear model this case is represented when either the basin of push users is large or when the rate at which contents are pushed is small. 
\begin{remark}
Not all videos will diffuse according to the proposed exponential-linear model. For instance,  there exist cases when the initial viewcount dynamics displays a characteristic sigmoid shape. We reported in Fig~\ref{fig:pictures}d,e,f the viewcount dynamics for three popular music videos: in those cases the dynamics resembles the logistic curve associated to the spread of epidemics. We can ascribe such similarity to the presence of a positive feedback in the push mechanism, e.g., those who access the content have some mean to recommend the content for others to access it, through targeted recommendation or similar mechanism. When a social network is present, this may happen due to the push of the content into the neighborhood of those who view the content. A similar and perhaps more powerful feedback effect can happen between different channels on the same platform, e.g., YouTube channels, and across different platforms through the recommendation list that is presented to the platform users.\\
This also qualifies the type of exponential-linear dynamics that we consider as those for which this type of feedback does not play a significant role. In particular, in the case of Fig~\ref{fig:pictures}a, the content is of interest at the national scale in the US, and the viewers are likely driven 
to the content by general search criteria (e.g., typing in a search engine). Also, in the case of Fig~\ref{fig:pictures}c, the viewers are likely those who browse for some specific recipe, whereas in the case of Fig~\ref{fig:pictures}b viewers are interested in a niche sport, where the event is known within the reference community. In all such cases we see that the linear part of the dynamics takes over and becomes dominant. 
\end{remark}
%%%%%%%%%%%%%%%%%%%%%%%%%%%%%%%%%%%%%%%%%%%%%%%%%%%%%%%%%%%%%%%%%%%%%%%%%%%%%%%%%%%%%%%%%%%%%%%%%%%%%%%%%%%%%%%%%%%%%%%%
}

\subsection*{Game model} 

In our model, we are interested in the uptake of the pull users. Pull users interested in the given content do not know in advance its quality. They may discover it during interval $[0,\tau]$ at random. Their estimation of the interest/potential quality is based on the viewcount $X$. In the simplest case, contents with higher viewcount are more likely to be accessed. 

We define by $\Xps(t)$ the number of push users accessing the content up to time $t$ as a reaction to the push mechanism and, analogously, by $\Xpu(t)$ the number of those accessing it through the pull mechanism. Clearly, $X(t)=\Xps(t)+\Xpu(t)$.

Users have beliefs about the quality of the content. We denote by $\pi_G$ the belief that a given content is good (i.e., of interest or anyway worth accessing) and, conversely, by $\pi_B=1-\pi_G$ the belief that the content is bad. We denote by $\bpi=(\pi_G,\pi_B)$ the corresponding  distribution. Stating $\pi_G=0.75$ means that a user believes that every $4$ similar contents she would get $3$ good ones and $1$ bad one.

%, before accessing the content, pull users may have their own belief on the 
%expected quality, i.e., they may believe it is bad with some probability, $\theta=B$, 
%or they expect to be satisfied with probability, $\theta=G$. Thus, the belief on the quality of a 
%content for a randomly tagged user is a distribution $\bpi$ where $\pi_G=\Pr{\theta=G}$ and 
%$\pi_B=\Pr{\theta=B}$: this is the profile of contents based on the quality perceived 
%by a randomly tagged user.

% In the following we refer to two different models. 
% \begin{itemize}
% \item Finite horizon: users may access the content at random during interval $[0,\tau]$, this is 
% meant to model the case when the content is relevant only during a finite period (e.g., news on a 
% certain period of time).
% \item Infinite horizon: users can access the content with no specific deadline (e.g., classical music).
% \end{itemize}
%%%%%%%%%%%%%%%%%%%%%%%%%%%%%%%%%%%%%%%%%%%%%%%%%%%%%%%%%%%%%%%%%%%%%%%%%%%%%%%%%%%%%%%%%%%%%%%%%%%%%%%%%%%%%%%%%%%%%%%%
\begin{figure}[t]
  \centering
      \includegraphics[width=0.27\textwidth]{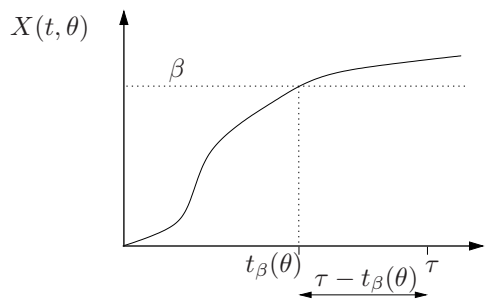}
      \put(-180,100){$X(t,\theta)$}
      \put(-91,10){$t_\beta(\theta)$}\put(-23,10){$\tau$}\put(-65,5){$\tau-t_\beta(\theta)$}  
      \put(-120,84){$\beta$}      
  \caption{The reward or the cost of content $\theta$ for a tagged user is 
           represented by the time during which the content can be accessed, i.e., 
           when viewcount is larger than threshold $\beta$.}\label{fig:util}
\end{figure}
%%%%%%%%%%%%%%%%%%%%%%%%%%%%%%%%%%%%%%%%%%%%%%%%%%%%%%%%%%%%%%%%%%%%%%%%%%%%%%%%%%%%%%%%%%%%%%%%%%%%%%%%%%%%%%%%%%%%%%%%
The content access configures as a game where we define {\em players}, {\em strategies} and {\em utilities}. 
{\em Players:} the {\em players} are pull users: based on their belief $\bpi$, they may access the content $\theta$ or not. 

{\em Strategies:} they access $\theta$ when the viewcount is above a certain threshold, i.e., $X(t)\geq \beta \geq 0$.\footnote{We consider the reference case when players select based on the viewcount only for the sake of explanation. We will extend the model to other interesting cases in next sections.} Hence, the {\em strategy} for a certain user is {\fdp the viewcount threshold} $\beta \geq 0$. Of course, all other players also adopt their own strategy with respect to $\theta$ and we denote $\balpha$ the vector of strategies of all remaining users: $\balpha$ is a vector of viewcount thresholds for all other users.

{\em Utilities:} users face either a {\em cost} $C$ or a {\em reward} $R$ for playing strategy $\beta$: the cost and the reward is the fraction of lifetime when the content is in the viewcount range, i.e, when they are willing to access it. The rationale to define this cost/reward is the following. Let a good content be worth one unit reward, and a bad content worth a unit cost. The user may hit several similar contents at random over time. If they are good, the fraction of those actually accessed will be proportional to $1-\frac{t_{\beta}}{\tau}$, where we define $t_{\beta}=\min\{t \,|\,\beta=X(t)\}$, i.e., $t_{\beta}$ is the smallest instant when the threshold is achieved. That also is going to be the long term reward, or the cost, for accessing similar online contents. 
Formally, 
\[
 R(\balpha,\beta,G)=(\tau - t_{\beta}(G))^+, \quad  C(\balpha,\beta,B)=(\tau - t_{\beta}(B))^+
\]
Finally, based on their belief $\bpi$, players expect a utility when playing $\beta$ 
that amounts to 
\begin{eqnarray}
U(\balpha,\beta)= \pi_G R(\balpha,\beta,G) - \pi_B C(\balpha,\beta,B) \nonumber 
\end{eqnarray}

According to the above expression, the cost and the reward are a function of the 
interval when the content is above the threshold, i.e., when the users can 
benefit from it, and depends on the other players strategy. Furthermore, the action 
taken by players depends on their belief on the quality of the content.

In the following we will investigate {\em symmetric equilibria}, i.e., equilibria 
for which all users play $\alpha\geq 0$. We can hence adopt a simplified scalar notation and 
define $t_{\alpha}=\min\{t|\alpha=X(t)\}$.

Let a tagged user playing $\beta$ when all the remaining users use 
$\alpha$: we make the assumption that Wardrop conditions holds. Namely, 
for a large number of users any unilateral deviation of a single user 
does not affect the utilities of other users. I.e., deviations due to 
a single user action are negligible. Wardrop equilibria are much 
easier to compute than the Nash equilibrium; however, Wardrop 
is a good approximation for the latter, as in \cite{Haurie85}.\footnote{A traditional application 
of Wardrop equilibria is road traffic, where users tend to settle to routes minimizing their 
delay: the effect of a route change of an individual driver belonging to a flow is negligible system-wide to the utilities of 
other users.}.

The tagged user expects to gain a certain reward $R(\alpha,\beta,G),$ for a good 
content and expects to suffer a cost $C(\alpha,\beta,B)$ when the content is bad: 
under which conditions $\alpha$ is the best response to itself, namely $\beta^*(\alpha)$?  
We answer to this question in the next sections under different knowledge of the 
viewcount dynamics available to users.

Before we introduce our analysis, we recall that the utility function has the following expression for  $\beta\geq \beta_{\tau,B}$
\begin{eqnarray*}
&&U(\alpha,\beta)=\left\{ 
\begin{array}{ccc}
 0 & \mbox{ if }& \beta \geq  \beta_{\tau,G} \\
 \pi_G (\tau - t_\beta(G)) & \mbox{ if } &     \beta_{\tau,B} \leq \beta \leq \beta_{\tau,G}
  \end{array}
\right.
\end{eqnarray*}
where $ \beta_{\tau,\theta}$  is solution of the following equation 
\begin{equation}
\fdp{t_ {\beta_{\tau}}(\theta) =\tau } 
\end{equation}
We observe that the utility function  $U$ is nonincreasing for $\beta \geq\beta_{\tau,B}$. However the best response  $\beta^*(\alpha)$ can be found only in the interval $[0,\beta_{\tau,B}]$. As a result we restrict our analysis to case when $\beta\leq  \beta_{\tau,B}$ in which the utility function can be expressed as 
$$
U(\alpha,\beta) =  \pi_G (\tau - t_\beta(G))- \pi_B (\tau -t_\beta(B))
$$

%More specifically, we are interested in the existence of symmetric Wardrop equilibria.

%%%%%%%%%%%%%%%%%%%%%%%%%%%%%%%%%%%%%%%%%%%%%%%%%%%%%%%%%%%%%%%%%%%%%%%%%%%%%%%%%%%%%%%%%%%%%%%%%%%%%%%%%%%%%%%%%%%%%%%%
%%%%%%%%%%%%%%%%%%%%%%%%%%%%%%%%%%%%%%%%%%%%%%%%%%%%%%%%%%%%%%%%%%%%%%
\section{Plain Viewcount}\label{sec:gt1}
%%%%%%%%%%%%%%%%%%%%%%%%%%%%%%%%%%%%%%%%%%%%%%%%%%%%%%%%%%%%%%%%%%%%%%
%%%%%%%%%%%%%%%%%%%%%%%%%%%%%%%%%%%%%%%%%%%%%%%%%%%%%%%%%%%%%%%%%%%%%%%%%%%%%%%%%%%%%%%%%%%%%%%%%%%%%%%%%%%%%%%%%%%%%%%%

\fdp{The basic model that we introduce in this section is based on the assumption that pull }%Pull 
users rely on the number of hits of the contents to judge if it is worth to access it or not, i.e., they judge based on how many users accessed it. Thus, they play based on the dynamics. We hence specialize our analysis to two cases. % They 
% take two actions: they access content when the number of hits is above 
% a certain threshold $\alpha$, or they do not access the content in case the number of hits 
% is below $\alpha$. 
\subsection{Linear case}
First, we examine the case when the process of diffusion of contents is linear. This is the case when the time scale of the content diffusion is very large compared  to the pool of potential users. %One such case happens when 
\fdp{A mechanism that that is able generate such a dynamics is the combined effect of an advertisement which 
is broadcasted to a very large pool of viewers, e.g., covering newspapers or other general audience media, 
and people so made aware of the existence of the content who decide to access the content with some random delay thereafter.}

Thus, we let $\Xps(t,\theta)=\Lps t \cdot \One(t)$ where $\One(t)$ is the unitary step function, 
and  $\Xpu(t,\theta)=\Lpu (t-t_{\alpha})\cdot \One(t-t_{\alpha})$\footnote{In a single source
diffusion model, for instance, $X=N(1-\exp(-\lambda t))=N\lambda t + o(t)$}.

Observe that in this case $\Lps=\Lps(\theta)$, whereas $\Lpu$ is independent of $\theta$. In fact, we assume
pull users judge based on viewcount only \cite{Debo2012}. However, we assume that $\Lps(G)\geq \Lps(B)$.

\begin{lemma}\label{lem:noinfo}
In the linear case, under the assumption $\Lpu(G)\geq \Lpu(B)$, it holds
\begin{itemize}
\item[i.] if $\frac{\pi_G}{\Lps(G)} \geq \frac{\pi_B}{\Lps(B)}$, then $\beta^*(\alpha)=0$.  
\item[ii.] if $\frac{\pi_G}{\Lps(G)} \leq \frac{\pi_B}{\Lps(B)}$ but  $\frac{\pi_G}{\Lps(G)+\Lpu} \geq \frac{\pi_B}{\Lps(B)+\Lpu}$
, then $\beta^*(\alpha)=\alpha$
\item[iii.] if $\frac{\pi_G}{\Lps(G)} \leq \frac{\pi_B}{\Lps(B)}$ but  $\frac{\pi_G}{\Lps(G)+\Lpu} < \frac{\pi_B}{\Lps(B)+\Lpu}$
, then $\beta^*(\alpha)=\beta_{\tau,B}$
\end{itemize} 
\end{lemma}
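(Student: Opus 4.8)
The plan is to reduce the best-response computation to minimizing a single continuous, piecewise-linear function of $\beta$ on $[0,\beta_{\tau,B}]$, and then to read off the minimizer from the signs of its two slopes. First I would make the hitting times explicit. Since only push users contribute before the bulk of players start accessing, the population threshold $\alpha$ is reached at $t_\alpha(\theta)=\alpha/\Lps(\theta)$, and the viewcount seen by the tagged user is $\Lps(\theta)t$ for $t\le t_\alpha(\theta)$ and $\Lps(\theta)t+\Lpu\,(t-t_\alpha(\theta))$ for $t\ge t_\alpha(\theta)$. Inverting these two linear pieces gives, for a tagged threshold $\beta$ (and a feasible common strategy $\alpha\in[0,\beta_{\tau,B}]$),
\[
t_\beta(\theta)=
\begin{cases}
\dfrac{\beta}{\Lps(\theta)}, & 0\le\beta\le\alpha,\\[1ex]
\dfrac{\beta+\Lpu\,\alpha/\Lps(\theta)}{\Lps(\theta)+\Lpu}, & \alpha\le\beta\le\beta_{\tau,B}.
\end{cases}
\]
A direct check shows the two branches agree at $\beta=\alpha$, so $t_\beta(\theta)$ is continuous in $\beta$.

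Next I would exploit the reduced utility valid on $[0,\beta_{\tau,B}]$, namely $U(\alpha,\beta)=\pi_G(\tau-t_\beta(G))-\pi_B(\tau-t_\beta(B))$. Writing $U(\alpha,\beta)=(\pi_G-\pi_B)\tau-f(\beta)$ with $f(\beta):=\pi_G t_\beta(G)-\pi_B t_\beta(B)$, maximizing $U$ is the same as minimizing $f$. From the two-branch formula, $f$ is continuous and piecewise linear with a single kink at $\beta=\alpha$: its slope is $s_1:=\pi_G/\Lps(G)-\pi_B/\Lps(B)$ on $[0,\alpha]$ and $s_2:=\pi_G/(\Lps(G)+\Lpu)-\pi_B/(\Lps(B)+\Lpu)$ on $[\alpha,\beta_{\tau,B}]$. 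Crucially, the sign of $s_1$ is exactly the ordering of $\pi_G/\Lps(G)$ and $\pi_B/\Lps(B)$ in the hypotheses, and the sign of $s_2$ is exactly the ordering of the corresponding $\Lpu$-shifted ratios.

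The minimum of a continuous piecewise-linear function over an interval is attained at an endpoint or at the kink, according to the slope signs, which yields the three cases. If $s_1\le0$ and $s_2\ge0$ then $f$ decreases then increases, so the minimizer is the kink $\beta^*=\alpha$, giving (ii); if $s_1\le0$ and $s_2<0$ then $f$ is nonincreasing throughout, so the minimizer is the right endpoint $\beta^*=\beta_{\tau,B}$, giving (iii). The one step needing the structural hypothesis is case (i): there I only assume $s_1\ge0$ and must still exclude a later decrease. I would argue that $s_1\ge0$, i.e.\ $\pi_G\Lps(B)\ge\pi_B\Lps(G)$, together with the model assumption $\Lps(G)\ge\Lps(B)$ forces $\pi_G\ge\pi_B$; since $s_2\ge0$ is equivalent to $\pi_G\Lps(B)-\pi_B\Lps(G)+\Lpu(\pi_G-\pi_B)\ge0$, both summands are then nonnegative, so $s_2\ge0$ as well. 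Hence $f$ is nondecreasing on all of $[0,\beta_{\tau,B}]$ and $\beta^*=0$, giving (i).

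The main obstacle is therefore not any single computation but the interplay in case (i): the hypothesis constrains only $s_1$, and without $\Lps(G)\ge\Lps(B)$ one could have $s_1\ge0$ yet $s_2<0$, which would push the optimum to $\beta_{\tau,B}$ instead of $0$. I would also flag the boundary sign cases ($s_1=0$ or $s_2=0$): there the corresponding linear branch is flat, so the stated $\beta^*$ is merely one minimizer within a whole segment of best responses. This degeneracy is precisely the source of the continuum of equilibria anticipated in the introduction, and it explains why the hypotheses of (i) and (ii) (respectively (ii) and (iii)) are allowed to overlap at equality.
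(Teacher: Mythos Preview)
Your proposal is correct and follows essentially the same route as the paper: split into the regimes $\beta\le\alpha$ and $\beta\ge\alpha$, observe that the utility is affine in $\beta$ on each piece with slopes determined by $\pi_G/\Lps(G)-\pi_B/\Lps(B)$ and $\pi_G/(\Lps(G)+\Lpu)-\pi_B/(\Lps(B)+\Lpu)$, and in case~(i) use $\Lps(G)\ge\Lps(B)$ to deduce $\pi_G\ge\pi_B$ and hence nonnegativity of the second slope. Your packaging as ``minimize a continuous piecewise-linear $f$ with a single kink'' is a clean way to say what the paper does by separately optimizing on each interval and then comparing.
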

\begin{proof} We need to distinguish two cases, namely $\alpha \geq \beta$ and $\alpha \leq \beta$, determine the best response for each case, and then by comparison choose the best response $\beta^*=\beta^*(\alpha)$. The expression for the utility in the two cases follows. 

If {$\alpha \geq \beta$}, then $X(t,\theta)=\Xps(t,\theta)$ for $0\leq t \leq t_\beta$. 
Thus, we can write simply
\dm{
\[
t_\alpha=\frac{\alpha}{\Lps(\theta)},\quad t_\beta=\frac{\beta}{\Lps(\theta)} 
\]}
and the expression for the utility
\begin{eqnarray}\label{eq:utilcase1}
U(\alpha,\beta)= \tau (\pi_G - \pi_B) - \beta \Big ( \frac{\pi_G}{\Lps(G)} - \frac{\pi_B}{\Lps(B)}\Big ) 
\end{eqnarray} 
\dm{
If {$\alpha \leq \beta$}, then 
$X(t,\theta)=\Xps(t,\theta)$ for $0\leq t \leq t_\alpha$ and $X(t,\theta)=\Xps(t,\theta)+\Xpu(t,\theta)$ for $t_{\alpha}\leq t \leq t_{\beta}$.
In this case,}
\[
t_\alpha=\frac{\alpha}{\Lpu(\theta)},\quad t_\beta=\frac{\beta-\alpha}{\Lps(\theta)+\Lpu}+t_\alpha 
\]
and in turn 
\begin{eqnarray}\label{eq:utilcase2}
U(\alpha,\beta)&&= \tau (\pi_G - \pi_B)- \alpha \Big ( \frac{\pi_G}{\Lps(G)} - \frac{\pi_B}{\Lps(B)} \Big )\nonumber \\
               && - (\beta -\alpha) \Big ( \frac{\pi_G}{\Lps(\theta)+\Lpu} - \frac{\pi_B}{\Lps(\theta)+\Lpu}\Big )
\end{eqnarray}%\label{eq:utilcase2}
Now, we can distinguish the three statements in the claim:
%\begin{itemize}
%\item[i.] 

{i.} $\frac{\pi_G}{\Lps(G)} \geq \frac{\pi_B}{\Lps(B)}$: in the first case, due to linearity, $\beta=0$ maximizes
the utility; in the second case, we observe that indeed it must hold $\pi_G \geq \pi_B$, and then  
\[
\pi_G\Lps(B)-\pi_B\Lps(G)\geq 0 \geq \Lpu (\pi_B - \pi_G)
\] 
so that  $\frac{\pi_G}{\Lps(G)+\Lpu} \geq \frac{\pi_B}{\Lps(B)+\Lpu}$: in turn the utility function 
is maximized again if $\beta=0$. Hence, it holds $\beta^*(\alpha)=0$.

{ii.} In the first case, it is optimal to maximize $\beta$, which brings $\beta=\alpha$. In the second 
case, in turn it is optimal to minimize $\beta$, so that again $\beta=\alpha$. Hence, $\beta^*(\alpha)=\alpha$.

{iii.} In the first case, the best response is the same as in ii. In the second case, instead, it is optimal to maximize 
$\beta$, so that again $\beta=\beta_{\tau,B}$. However, the last term of (\ref{eq:utilcase2}) is positive and $\beta=\beta_{\tau,B}$ 
maximizes it. Also, by comparison with (\ref{eq:utilcase1}), indeed $\beta^*(\alpha)=\beta_{\tau,B}$ in this case.
\end{proof}

The above results provide a characterization of the possible symmetric Wardrop equilibria of the system.

\begin{thm}\label{ea:linearWardrop}
\begin{itemize}
\item[i.] if $\frac{\pi_G}{\Lps(G)} \geq \frac{\pi_B}{\Lps(B)}$, then $0$ is a symmetric Wardrop equilibrium 
\item[ii.] if $\frac{\pi_G}{\Lps(G)} \leq \frac{\pi_B}{\Lps(B)}$ but  $\frac{\pi_G}{\Lps(G)+\Lpu} \geq \frac{\pi_B}{\Lps(B)+\Lpu}$
,\dm{ then all $0\leq  \beta \leq \beta_{\tau,B}$ are symmetric Wardrop equilibria }
\item[iii.] if $\frac{\pi_G}{\Lps(G)} \leq \frac{\pi_B}{\Lps(B)}$ but  $\frac{\pi_G}{\Lps(G)+\Lpu} < \frac{\pi_B}{\Lps(B)+\Lpu}$
, \dm{ then $\beta_{\tau,B}$ is a symmetric Wardrop equilibrium }
\end{itemize} 
\end{thm}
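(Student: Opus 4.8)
The plan is to read the equilibrium characterization directly off Lemma~\ref{lem:noinfo}, since under the Wardrop assumption a symmetric equilibrium is nothing but a fixed point of the best-response map. First I would recall the structural consequence of atomlessness invoked in the model: because a unilateral deviation of the tagged user leaves the common strategy $\alpha$ of all remaining users unchanged, the aggregate viewcount dynamics $X(t,\theta)$ --- and hence the utility $U(\alpha,\beta)$ --- depends on $\alpha$ as a fixed parameter. Consequently a strategy $\alpha$ is a symmetric Wardrop equilibrium if and only if no tagged user can strictly gain by deviating, i.e. if and only if $\alpha$ belongs to the best-response set to itself. In the scalar notation of the model this reads simply $\alpha=\beta^*(\alpha)$.

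With this reformulation the three statements follow by substituting the three expressions for $\beta^*(\alpha)$ supplied by Lemma~\ref{lem:noinfo} and solving the fixed-point equation on the admissible interval $[0,\beta_{\tau,B}]$ (recall that, as established at the end of Sec.~\ref{sec:model}, the best response can only be sought in this interval). In regime~i the lemma gives $\beta^*(\alpha)=0$ identically, so the unique solution of $\alpha=\beta^*(\alpha)$ is $\alpha=0$, which establishes statement~i. In regime~iii the lemma gives $\beta^*(\alpha)=\beta_{\tau,B}$ identically, so the unique fixed point is $\alpha=\beta_{\tau,B}$, establishing statement~iii.

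The interesting case is regime~ii, where Lemma~\ref{lem:noinfo} yields $\beta^*(\alpha)=\alpha$ for every $\alpha$: here the fixed-point equation holds identically, so every $\alpha\in[0,\beta_{\tau,B}]$ is a symmetric Wardrop equilibrium. This is precisely the continuum of equilibria anticipated in the introduction, and it is the one point requiring a little care: I would verify that the best response is attained \emph{at} $\beta=\alpha$ rather than merely approached, by checking from~(\ref{eq:utilcase1}) and~(\ref{eq:utilcase2}) that under the two inequalities defining regime~ii the map $U(\alpha,\cdot)$ is nondecreasing for $\beta\leq\alpha$ (the coefficient of $\beta$ in~(\ref{eq:utilcase1}) is nonnegative when $\frac{\pi_G}{\Lps(G)}\leq\frac{\pi_B}{\Lps(B)}$) and nonincreasing for $\beta\geq\alpha$ (the coefficient of $\beta-\alpha$ in~(\ref{eq:utilcase2}) is nonpositive when $\frac{\pi_G}{\Lps(G)+\Lpu}\geq\frac{\pi_B}{\Lps(B)+\Lpu}$), so that $\beta=\alpha$ is a genuine maximizer.

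Beyond this verification the argument is a direct corollary of the lemma and involves no further estimation. The main step is therefore conceptual rather than computational: the legitimacy of identifying symmetric equilibria with self-best-responses, which the Wardrop assumption secures, together with the restriction of the search to $[0,\beta_{\tau,B}]$ on which $\beta^*(\alpha)$ is defined.
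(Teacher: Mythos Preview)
Your proposal is correct and follows exactly the approach the paper intends: the theorem is stated immediately after Lemma~\ref{lem:noinfo} with no separate proof, as an evident corollary via the fixed-point characterization of symmetric Wardrop equilibria. Your added verification in regime~ii that $\beta=\alpha$ is a genuine maximizer (rather than a limit) is a helpful clarification that the paper leaves implicit in the proof of the lemma itself.
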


It is possible to interpret the above result as follows: $\frac{\pi_G}{\Lps(G)}$ represents the time pace at which push users are believed to access a good content. Similarly $\frac{\pi_B}{\Lps(B)}$ represents the time pace at which push users are believed to access a bad content. Thus, condition i. suggests that it is always convenient to anticipate the access to the content. In case ii., the situation is dictated by the uptake of pull users, because they increase the viewcount thus reinforcing the believed viewcount pace of a good content against that of a bad content. Finally, in case iii. there is no incentive in accessing the content.

%%%%%%%%%%%%%%%%%%%%%%%%%%%%%%%%%%%%%%%%%%%%%%%%%%%%%%%%%%%%%%%%%%%%%%
\subsection{Exponential case:  fixed time horizon}
%%%%%%%%%%%%%%%%%%%%%%%%%%%%%%%%%%%%%%%%%%%%%%%%%%%%%%%%%%%%%%%%%%%%%%

Let us consider the content dissemination process operated by a content provider using a finite set of potential target users. After the content is posted by the provider directly to users, it will be transmitted to more and more users by using some preferential channels. In this case, we need to model the push dynamics accounting for the size $N$ of the pool of push users, i.e., we assume that the content provider disseminates the content according to
\[ 
{\dot X_{ps}}(t,\theta)= \Lps(\theta)(N-\Xps(t,\theta)), 
\] 
so that
\begin{equation}\label{expo1}
\Xps(t,\theta)=N(1 - e^{-\Lps(\theta) t} )\mbox{ for }  t\geq 0 
\end{equation}

We reported in Fig.~\ref{fig:expo_utility} the shape of the utility function under the exponential case \fdp{for a fixed time horizon.}
As it can be observed in case a), for smaller values of $\alpha$, i.e, $\alpha=400$ a low value of the belief $\pi_G$ causes the access to be delayed till time $\tau$, whereas for increasing  values of $\pi_G$ we observe first a local maximum at $\alpha$ ($\pi_g=0.75$), and finally the strategy $\beta=0$ takes over corresponding to very large values of $\pi_G$. Indeed, such a behavior of the utility function resembles -- for a fixed $N$ -- what we observed in the linear case. However, at a closer look, namely in  Fig.~\ref{fig:expo_utility}c) we understand that the situation is more elaborate: in particular, we know that number of push users $N$ impacts the speed at which the viewcount increases. As such, a small $N$ \fdp{does not} permit to pass the threshold $\alpha$, whereas a very large one \fdp{incentivizes} early access: recall that $\beta_{\max}:=\beta_{\tau,B}$ means access at time $t=0$. In between, the presence of a maximum predicts, as in the linear case, the existence of best responses that lie in the interior of $[0,\beta_{\max}]$. This intuitive numerical insight is confirmed by the theoretical results that we detail in the following.

%%%%%%%%%%%%%%%%%%%%%%%%%%%%%%%%%%%%%%%%%%%%%%%%%%%%%%%%%%%%%%%%%%%%%%%%%%%%%%%%%%%%%%%%%%%%%%%%%%%%%%%%%%%%%%%%%%%%%%%%
\begin{figure*}[t]
  \centering
  \subfigure[Case $\alpha=400$]{\includegraphics[width=0.30\textwidth]{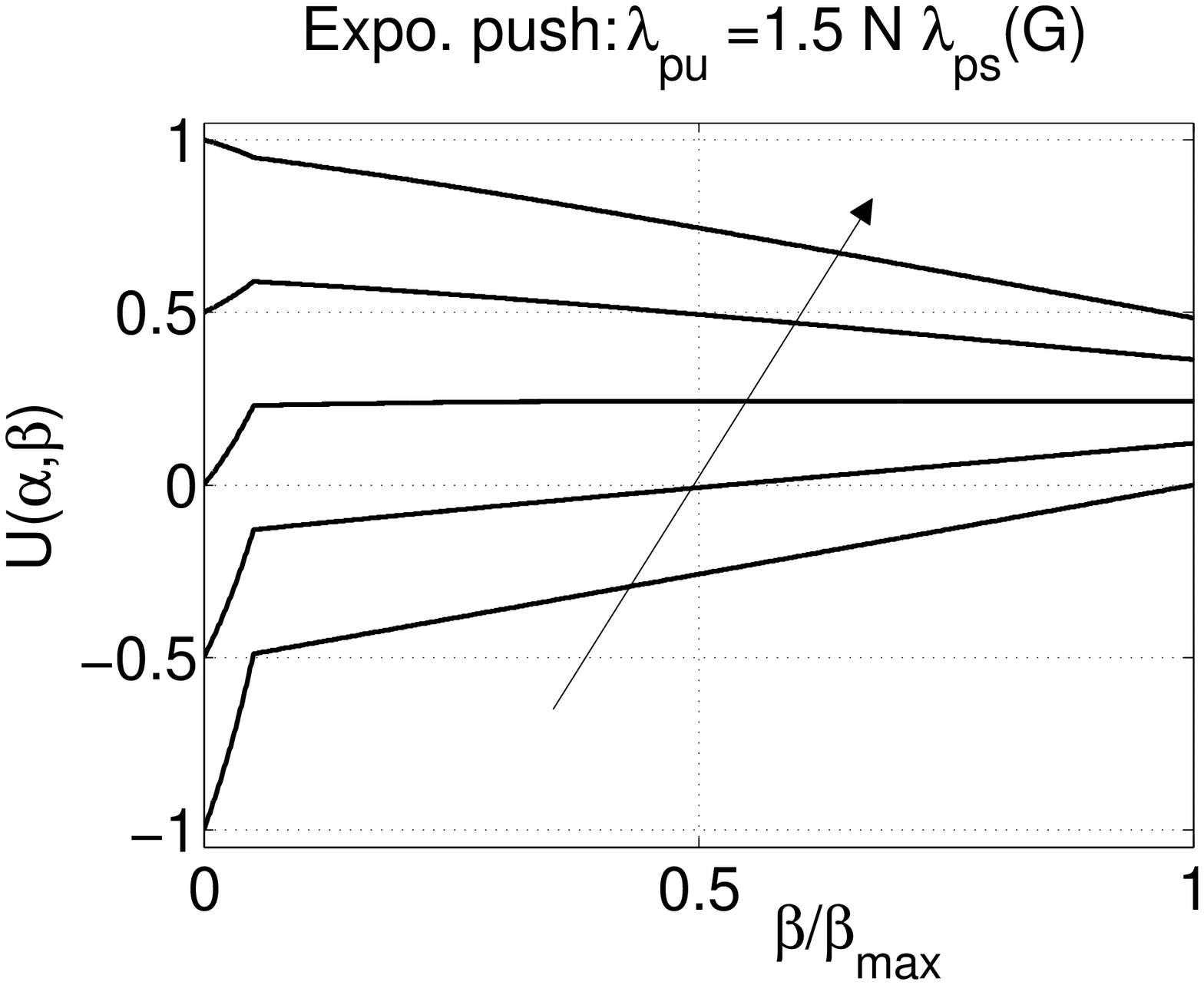}\put(-110,35){\scriptsize $\pi_G=0,0.25,0.5,0.75, 1$}}
\subfigure[Case $\alpha=700$]{\includegraphics[width=0.30\textwidth]{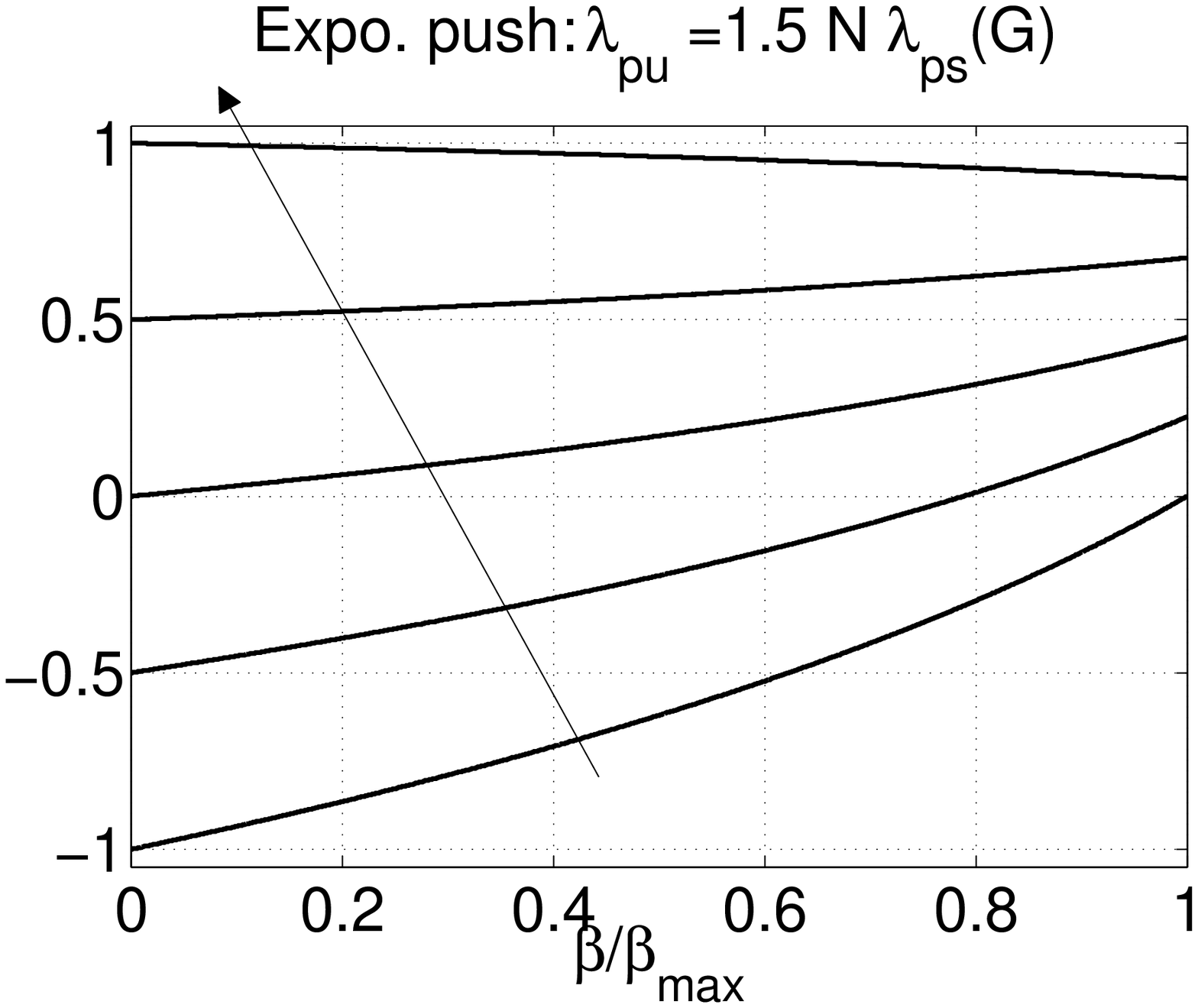}\put(-90,28){\scriptsize $\pi_G=0,0.25,0.5,0.75, 1$}}
\subfigure[$\alpha=700$, increasing $N$]{\includegraphics[width=0.30\textwidth]{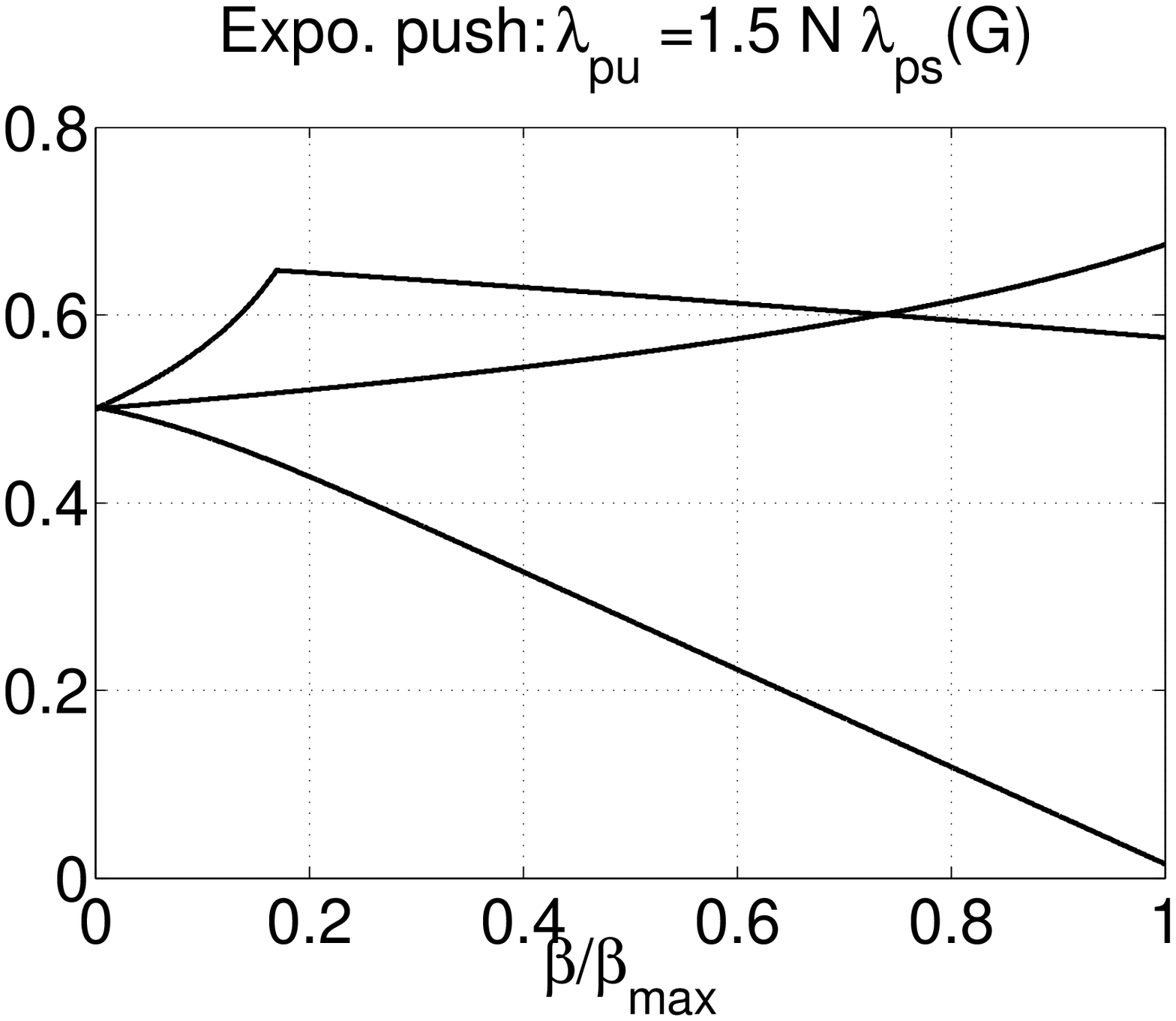}\put(-50,45){\scriptsize $N=50000$}\put(-120,105){\scriptsize $N=1000$}\put(-97,80){\scriptsize $N=700$}}
\caption{The utility function for $N=1000$, for $\tau=10$ days, $\Lps(G)=10^{-1}$ views/day, $\Lps(B)=\Lps(G)/10$. a) $\alpha=400$ views, b) $\alpha=700$ views. Increasing values of the belief $\pi_G$ determine different shapes for the utility function. c) Increasing values of $N=700,1000,50000$ for $\alpha=700$. All graphs for $\Lpu =1.5 N \Lps(G)$.}\label{fig:expo_utility}
\end{figure*}
%%%%%%%%%%%%%%%%%%%%%%%%%%%%%%%%%%%%%%%%%%%%%%%%%%%%%%%%%%%%%%%%%%%%%%%%%%%%%%%%%%%%%%%%%%%%%%%%%%%%%%%%%%%%%%%%%%%%%%%%

We distinguish two cases, namely $\alpha<\beta$ and $\beta \leq \alpha$.  

If {$\beta \leq \alpha$},  we have 
\[
t_{\beta}(\theta)=-\frac 1{\Lps(\theta)}\log \Big ( 1- \frac \beta N  \Big ), \;\;
t_{\alpha}(\theta)=-\frac 1{\Lps(\theta)}\log \Big ( 1- \frac \alpha N  \Big )
\]
Hence the utility becomes 
\[
U(\alpha,\beta)=(\pi_G-\pi_B)\tau + \log \Big ( 1- \frac \beta N  \Big ) \Big ( \frac{\pi_G}{\Lps(G)} - \frac{\pi_B}{\Lps(B)}\Big )
\]
\dm{Let $\beta_1^*(\alpha)$ (resp. $\beta_2^*(\alpha)$) be the best response to $\alpha$ in $[0,\alpha]$ (resp. $[\alpha,\beta_{max}]$)}
\begin{lemma}
%In the exponential case, under the assumption $\lambda_{ps}(G)>\lambda_{ps}(B)$, it holds
In the exponential case, under the assumption $\lambda_{ps}(G)>\lambda_{ps}(B)$, it holds for $\beta \leq \alpha$
\begin{itemize}
\item If $\frac{\pi_G}{\pi_B} < \frac{\lambda_{ps}(G)}{\lambda_{ps}(B)} $ then $\beta^*_1 (\alpha)=\alpha$
\item  If $\frac{\pi_G}{\pi_B} > \frac{\lambda_{ps}(G)}{\lambda_{ps}(B)} $ then $\beta^*_1 (\alpha)=0$ 
\item  If $\frac{\pi_G}{\pi_B}= \frac{\lambda_{ps}(G)}{\lambda_{ps}(B) }$ then for every $\beta^*_1 \in [0,\alpha]$ is optimal
\end{itemize}
\end{lemma}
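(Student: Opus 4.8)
The plan is to exploit the closed form of the utility already derived for the regime $\beta\le\alpha$, namely
\[
U(\alpha,\beta)=(\pi_G-\pi_B)\tau+\log\Big(1-\frac{\beta}{N}\Big)\Big(\frac{\pi_G}{\Lps(G)}-\frac{\pi_B}{\Lps(B)}\Big),
\]
and to reduce the maximization over $\beta\in[0,\alpha]$ to a one-dimensional monotonicity argument. The additive term $(\pi_G-\pi_B)\tau$ is constant in $\beta$, so it can be dropped. Since the admissible strategies satisfy $\alpha\le\beta_{max}=\beta_{\tau,B}=N(1-e^{-\Lps(B)\tau})<N$, the map $\beta\mapsto\log(1-\beta/N)$ is finite on $[0,\alpha]$, strictly decreasing, and non-positive, vanishing only at $\beta=0$. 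Consequently $U(\alpha,\cdot)$ is monotone on $[0,\alpha]$, and the direction of monotonicity is fixed by the sign of the constant $\kappa:=\frac{\pi_G}{\Lps(G)}-\frac{\pi_B}{\Lps(B)}$.

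The only bookkeeping step is to match the sign of $\kappa$ with the threshold appearing in the statement. Putting $\kappa$ over the common denominator $\Lps(G)\Lps(B)>0$ gives $\kappa=\big(\pi_G\Lps(B)-\pi_B\Lps(G)\big)/\big(\Lps(G)\Lps(B)\big)$, whose sign equals that of $\pi_G\Lps(B)-\pi_B\Lps(G)$; dividing by $\pi_B\Lps(B)>0$ yields $\kappa>0\iff\frac{\pi_G}{\pi_B}>\frac{\Lps(G)}{\Lps(B)}$, with the analogous equivalences for the reverse and equality cases.

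With this dictionary the three cases follow at once. If $\frac{\pi_G}{\pi_B}<\frac{\Lps(G)}{\Lps(B)}$ then $\kappa<0$, the product $\kappa\log(1-\beta/N)$ is increasing in $\beta$, and the maximum on $[0,\alpha]$ sits at the right endpoint, so $\beta^*_1(\alpha)=\alpha$. If $\frac{\pi_G}{\pi_B}>\frac{\Lps(G)}{\Lps(B)}$ then $\kappa>0$, the product is decreasing, and the maximum is at $\beta=0$, so $\beta^*_1(\alpha)=0$. If equality holds then $\kappa=0$, the $\beta$-dependent term vanishes identically, $U(\alpha,\beta)=(\pi_G-\pi_B)\tau$ throughout $[0,\alpha]$, and every $\beta^*_1\in[0,\alpha]$ is optimal. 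There is no real obstacle here: the hypothesis $\Lps(G)>\Lps(B)$ serves only to keep the denominators positive and the thresholds well posed, and the single point needing care is the sign translation of $\kappa$ together with the standing bound $\alpha<N$ that makes the logarithm finite.
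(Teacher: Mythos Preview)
Your proof is correct and follows exactly the approach the paper intends: it uses the closed-form utility $U(\alpha,\beta)=(\pi_G-\pi_B)\tau+\log(1-\beta/N)\,\kappa$ and reduces the maximization to the sign of $\kappa=\frac{\pi_G}{\Lps(G)}-\frac{\pi_B}{\Lps(B)}$, which is precisely the linear-case argument the paper invokes (with the monotone map $-\beta$ replaced by the monotone map $\log(1-\beta/N)$). Your explicit sign translation $\kappa>0\iff \frac{\pi_G}{\pi_B}>\frac{\Lps(G)}{\Lps(B)}$ and the remark that $\alpha<N$ keeps the logarithm finite are a bit more careful than the paper, but the substance is identical.
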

\begin{proof} \fdp{The proof is similar to the one developed in the linear case for $\beta \leq \alpha$.}
\end{proof}

Now, we  study the second case: {$\alpha \leq \beta$}.  If $t_\alpha \leq t \leq t_{\beta}$, 
\begin{equation}\label{expo2}
X(t,\theta)= N(1 - \exp(-\Lps(\theta) t) + \Lpu (t-t_{\alpha})
\end{equation} 
for which we obtain 
\begin{eqnarray}
t_{\beta}&&=\Lps(\theta) \Big(W\Big( \frac{\Lps(\theta)}\Lpu N  \frac{e^{\frac{\Lps(\theta)}\Lpu N(1-\frac \beta N)}}{\big (1-\frac \alpha N \big )}\Big ) \nonumber\\ 
                    && -  \log \Big ( \frac{e^{\frac{\Lps(\theta)}\Lpu N\big (1-\frac \beta N \big )}}{\big ( 1- \frac{\alpha}{N} \big )} \Big )\Big)
\end{eqnarray}
where $W(\cdot)$ is the Lambert function \cite{CorlessLambert}.  We can obtain the derivative of the above expression by letting $\xi(\beta)= \frac{e^{\zeta(\theta)(1-\frac \beta N)}}{\big (1-\frac \alpha N \big )}$
and $\zeta(\theta) = \frac{\Lps(\theta)}\Lpu N$
\begin{eqnarray}
&&\frac d{d\beta} t_{\beta} =\frac 1 {\Lps(\theta)} \frac d{d\beta} W(\zeta(\theta)\xi(\beta, \theta)) - \log(\xi(\beta))\nonumber \\ 
                                     %&&= \frac 1 {\Lps(\theta)}\Big( \frac {\zeta(\theta)W(\zeta(\theta)\xi(\beta, \theta))}{\zeta(\theta)\xi(\beta, \theta) (1+W(\zeta(\theta)\xi(\beta)))} - \frac 1{\xi(\beta)}\Big ) \frac d{d\beta} \xi(\beta, \theta)\nonumber \\    
                                     %&&=\frac {\zeta(\theta)\xi(\beta, \theta)}{N\Lps(\theta)} \Big( \frac {\zeta(\theta)W(\zeta(\theta)\xi(\beta, \theta))}{\zeta(\theta)\xi(\beta, \theta) (1+W(\zeta(\theta)\xi(\beta, \theta)))} - \frac 1{\xi(\beta, \theta)}\Big )\nonumber\\
%&&=\frac {\zeta(\theta)\xi(\beta, \theta)}{N\Lps(\theta)} \Big( \frac {W(\zeta(\theta)\xi(\beta, \theta))}{\xi(\beta, \theta) (1+W(\zeta(\theta)\xi(\beta, \theta)))} - \frac 1{\xi(\beta, \theta)}\Big )\nonumber\\
                                     &&=\frac 1 {\Lpu} \cdot \frac 1{1 + W(\zeta(\theta)\xi(\beta, \theta))} \nonumber
\end{eqnarray}
%Furthermore
%\begin{eqnarray}
%&&\frac {d^2}{d^2\beta} t_{\beta} = -\frac 1 {\Lpu} \cdot \frac {\frac d{d\beta}W(\zeta(\theta)\xi(\beta; \theta))}{\big (1 + W(\zeta(\theta)\xi(\beta; \theta)) \big )^2} \nonumber \\
%% &&= -\frac 1 {N \Lps(\theta)} \cdot \frac {\frac d{d\beta}W(\zeta(\theta)\xi(\beta; \theta))}{\big (1 + W(\zeta(\theta)\xi(\beta; \theta)) \big )^3} \nonumber \\
%&&= -\frac {\zeta(\theta)}{\Lpu} \cdot \frac {W(\zeta(\theta)\xi(\beta, \theta))}{\zeta(\theta)\xi(\beta, \theta)\big (1 + W(\zeta(\theta)\xi(\beta, \theta)) \big )^3} \frac d{d\beta} \xi(\beta, \theta)\nonumber \\
%&&= \frac {\Lps(\theta)}{N \Lpu^2} \cdot \frac {W(\zeta(\theta)\xi(\beta; \theta))}{\big (1 + W(\zeta(\theta)\xi(\beta; \theta)) \big )^3} \nonumber \\
% = \frac d{d\beta} W(\zeta(\theta)\xi(\beta)) - \log(\xi(\beta))\nonumber \\ 
%                                      &&= \Big( \frac {\zeta(\theta)W(\zeta(\theta)\xi(\beta))}{\zeta(\theta)\xi(\beta) (1+W(\zeta(\theta)\xi(\beta)))} - \frac 1{\xi(\beta)}\Big ) \frac d{d\beta} \xi(\beta)\nonumber \\    
%                                      &&=- \frac 1N \xi(\beta) \Big( \frac {\zeta(\theta)W(\zeta(\theta)\xi(\beta))}{\zeta(\theta)\xi(\beta) (1+W(\zeta(\theta)\xi(\beta)))} - \frac 1{\xi(\beta)}\Big )\nonumber\\
%                                      &&=\frac 1 {N \Lps(\theta)} \cdot \frac 1{1 + W(\zeta(\theta)\xi(\beta; \theta))} \nonumber\\
%\end{eqnarray}
%so that $t_\beta$ is convex increasing.
\fdp{After some cumbersome algebra, we derive}
\begin{lemma}
\fdp{In the exponential case, under the assumptions $\lambda_{ps}(G)>\lambda_{ps}(B)$ and  $\lambda_{ps}(G)N \leq \lambda_{pu}$, for $\alpha \leq \beta$ it holds}
% \begin{itemize}
%\item   If ${\pi_G}<\pi_B $ then $\beta^*_2=\beta_\tau(B)$ where $\beta_\tau(B)$  is solution of $t_{\beta_\tau}(B)=\tau $
%\item If $\frac{W(\zeta(G)\xi(\alpha, G))}{W(\zeta(B)\xi(\alpha,B))}\geq \frac{\pi_G}{\pi_B}$  for all $\beta \in [\alpha, \beta_\tau(B)]$ then $º\beta^*_2= \alpha$
%\item If $\frac{W(\zeta(G)\xi(\beta_\tau, G))}{W(\zeta(B)\xi(\beta_tau, B))}\leq \frac{\pi_G}{\pi_B}$  for all $\beta \in [\alpha, \beta_\tau(B)]$ then $º\beta^*_2=\beta_\tau(B)$
%\item otherwise $\beta^*_2$ is the solution of the following equation
%$$\frac{W(\zeta(B)\xi(\beta, B))}{W(\zeta(G)\xi(\beta, G))}=\frac{\pi_G}{\pi_B}$$
%\end{itemize}
 %For $\beta \geq \alpha$,  we have 
 \dm{\begin{itemize}
\item   If ${\pi_G}\leq \pi_B $ then $\beta^*_2(\alpha)=\beta_{\tau,B}$ 
\item If $\frac{1+W(\zeta(G)\xi(\alpha, G))}{1+W(\zeta(B)\xi(\alpha,B))}\geq \frac{\pi_G}{\pi_B}$ for all $\beta \in [\alpha, \beta_{\tau,B}]$  then $º\beta^*_2(\alpha)= \alpha$
\item If $\frac{1+W(\zeta(G)\xi(\beta_\tau, G))}{1+W(\zeta(B)\xi(\beta_tau, B))}\leq \frac{\pi_G}{\pi_B}$  for all $\beta \in [\alpha, \beta_\tau(B)]$  then $\beta^*_2(\alpha)=\beta_{\tau,B}$
\item otherwise $\beta^*_2(\alpha)$ is the solution of the following equation
$$\frac{1+W(\zeta(G)\xi(\beta^*_2(\alpha), G))}{1+W(\zeta(B)\xi(\beta^*_2(\alpha), B))}=\frac{\pi_G}{\pi_B}$$
\end{itemize}}
\end{lemma}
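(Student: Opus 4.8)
The plan is to treat $\beta_2^*(\alpha)$ as the maximizer of $\beta\mapsto U(\alpha,\beta)$ on the interval $[\alpha,\beta_{\tau,B}]$, where in this regime the utility reads $U(\alpha,\beta)=\pi_G(\tau-t_\beta(G))-\pi_B(\tau-t_\beta(B))$. First I would differentiate in $\beta$ and feed in the closed form for $\frac{d}{d\beta}t_\beta(\theta)=\frac{1}{\Lpu}\frac{1}{1+W(\zeta(\theta)\xi(\beta,\theta))}$ already obtained above, which gives
\[
\frac{\partial U}{\partial\beta}=\frac{1}{\Lpu}\left(\frac{\pi_B}{1+W(\zeta(B)\xi(\beta,B))}-\frac{\pi_G}{1+W(\zeta(G)\xi(\beta,G))}\right).
\]
Since $\Lpu>0$ and both denominators are positive, the sign of $\partial_\beta U$ is governed entirely by the comparison of
\[
h(\beta):=\frac{1+W(\zeta(G)\xi(\beta,G))}{1+W(\zeta(B)\xi(\beta,B))}
\]
with the belief ratio $\pi_G/\pi_B$: the utility is nondecreasing at $\beta$ precisely when $h(\beta)\ge \pi_G/\pi_B$. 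Locating the maximizer therefore amounts to understanding how $h(\beta)-\pi_G/\pi_B$ changes sign on $[\alpha,\beta_{\tau,B}]$.

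Two elementary facts then drive the case split. The first is that $h(\beta)\ge 1$ throughout: because $\Lps(G)>\Lps(B)$ gives $\zeta(G)>\zeta(B)$ and $1-\beta/N>0$ on the admissible range, one has $\zeta(G)\xi(\beta,G)>\zeta(B)\xi(\beta,B)$, and monotonicity of the principal Lambert branch $W$ yields $h(\beta)>1$. Consequently, whenever $\pi_G\le\pi_B$ we have $\pi_G/\pi_B\le 1\le h(\beta)$ for all $\beta$, so $U$ is nondecreasing and the maximizer is the right endpoint $\beta_{\tau,B}$; this disposes of the first bullet. The second fact, and the heart of the argument, is that $h$ is monotone (nonincreasing in $\beta$) on the admissible range.

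To establish monotonicity I would set $u=1-\beta/N$, write $P=W(\zeta(G)\xi(\beta,G))$ and $Q=W(\zeta(B)\xi(\beta,B))$, and use the defining identity $Pe^{P}=\zeta(G)\xi(\beta,G)$ together with $\xi(\beta,\theta)=e^{\zeta(\theta)u}/(1-\alpha/N)$ to get $\frac{dP}{du}=\zeta(G)\frac{P}{1+P}$ and likewise for $Q$. A short computation then gives
\[
\frac{d}{du}\log h(\beta)=\zeta(G)\,\psi(P)-\zeta(B)\,\psi(Q),\qquad \psi(w):=\frac{w}{(1+w)^2},
\]
and, since $u$ decreases in $\beta$, $h$ is nonincreasing in $\beta$ exactly when this expression is nonnegative. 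This is where the hypothesis $\Lps(G)N\le\Lpu$, i.e. $\zeta(G)\le 1$ (and a fortiori $\zeta(B)<1$), is used: the auxiliary function $\psi$ is unimodal with peak at $w=1$, and bounding the arguments through $\zeta(G)\le1$ together with $\zeta(G)>\zeta(B)$ and $P>Q$ is what lets one sign $\zeta(G)\psi(P)-\zeta(B)\psi(Q)$. I expect this Lambert-$W$ estimate to be the main obstacle, since $\psi$ is not globally monotone and the factor $1/(1-\alpha/N)$ inside the arguments can be large.

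Once $h$ is known to be monotone, $h(\beta)-\pi_G/\pi_B$ crosses zero at most once, so $U$ is unimodal (increasing then decreasing) on $[\alpha,\beta_{\tau,B}]$ and the maximizer is pinned down by the endpoint values of $h$. Comparing $h(\alpha)$ and $h(\beta_{\tau,B})$ against $\pi_G/\pi_B$ then yields the remaining bullets: if the ratio $\pi_G/\pi_B$ is not reached on the interval the optimum sits at the appropriate boundary ($\alpha$ or $\beta_{\tau,B}$), while if it is crossed in the interior the optimum is the unique solution of $h(\beta_2^*(\alpha))=\pi_G/\pi_B$, which is exactly the last bullet. The only genuinely delicate step is the monotonicity of $h$; everything else is first-order calculus plus the sign bookkeeping just described.
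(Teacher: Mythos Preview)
Your proposal is correct and follows essentially the same route as the paper: compute $\partial_\beta U$, reduce the sign question to comparing $h(\beta)=\frac{1+W(\zeta(G)\xi(\beta,G))}{1+W(\zeta(B)\xi(\beta,B))}$ with $\pi_G/\pi_B$, and then prove $h$ is monotone so that $U$ is unimodal and the case split follows. The paper's function $\bar W(\beta)$ is your $h(\beta)$, and the inequality it isolates, $\zeta(B)\psi(Q)\le \zeta(G)\psi(P)$ with $\psi(w)=w/(1+w)^2$, is precisely what your logarithmic-derivative computation produces.

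The one place where the paper is more explicit than your sketch is the step you flag as the main obstacle. Rather than arguing directly from $P>Q$ and the unimodality of $\psi$ (which, as you note, is delicate because the factor $1/(1-\alpha/N)$ can push the Lambert arguments past the peak of $\psi$), the paper packages the comparison as monotonicity in a single real parameter: it sets $f(y)=y\,\psi\big(W(y\,e^{y(1-\beta/N)}/(1-\alpha/N))\big)$, so that the desired inequality is just $f(\zeta(B))\le f(\zeta(G))$, and then checks by direct differentiation that $f'(y)>0$ whenever $y(1-\beta/N)\le 1$. The hypothesis $\lambda_{ps}(G)N\le\lambda_{pu}$ is exactly $\zeta(G)\le 1$, which guarantees this last condition on the whole range $y\in[\zeta(B),\zeta(G)]$. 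This parametrization sidesteps the need to locate $P$ and $Q$ relative to the peak of $\psi$ and is probably the cleanest way to close the gap you identified.
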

 \begin{proof} 
 The derivative of the utility function $U$ is  
\begin{equation}
\label{derivative}
U'(\alpha, \beta) =\frac{1}{\lambda_{pu}}\Big(\frac{\pi_B}{W(\zeta(G)\xi(\beta, B))}-\frac{\pi_B}{W(\zeta(G)\xi(\beta, G))}\Big)
\end{equation}
Since  $\xi(\beta, G)>\xi(\beta, B)$ and $\zeta(G)>\zeta(B)$ then it is easy to check under condition ${\pi_G}\leq \pi_B $ that $U'(\alpha, \beta) >0$. Hence the utility function attains a unique maximum  at   $\beta_{\tau,B}$. 

In order to complete the proof, it is sufficient to show that the function $U$  is  either  non-increasing, or there is some $\bar \beta$ such that  $U$  is non-decreasing for $\beta<\bar \beta$  and non-increasing for $\beta>\bar \beta$. 

Assume that there exists a $\bar \beta$ such that  $U'(\alpha, \bar \beta)\leq 0$. From (\ref{derivative}), it is sufficient to show that 
$$
U'(\alpha,  \beta)\leq 0  \;\,\;\mbox{ for all } \beta >\bar \beta
$$
%$$
%\frac{1+W(\zeta(G)\xi(\beta, G))}{1+W(\zeta(B)\xi(\beta,B))}\leq \frac{\pi_G}{\pi_B}\;\,\;\mbox{ for all } \beta >\bar \beta
%$$

%using the second derivative of $U$.  
%\begin{eqnarray*}
%U"(\alpha,\beta)&=&\frac{1}{N \lambda_{pu}^2}\Big( \frac {W(\zeta(\theta)\xi(\beta; \theta))}{(1 + W(\zeta()\xi(\beta; \theta)))^3}
%&&- \frac {W(\zeta(\theta)\xi(\beta; \theta))}{(1 + W(\zeta(\theta)\xi(\beta; \theta)))^3} \Big)
%\end{eqnarray*}

We can show the above propriety by letting $\bar W(\beta) = \frac{1+W(\zeta(G)\xi(\beta, G))}{1+W(\zeta(B)\xi(\beta,B))}$ and it turns out that 
\begin{eqnarray*}
&&\frac{\partial\bar W(\beta)}{\partial \beta}= \frac{1}{(1+W(\zeta(B)\xi(\beta,B)))^2}\\
%&&\frac{1}{x_B x_G (1+W(\zeta(B)\xi(\beta,B)))(1+W(\zeta(G)\xi(\beta,G))}\\
&&\Big(\frac {\zeta(B) W(\zeta(B)\xi(\beta, B))(1+W(\zeta(G)\xi(\beta, G)))}{1+W(\zeta(B)\xi(\beta, B))}\\
&&-\frac {\zeta(G) W(\zeta(G)\xi(\beta, G))(1+W(\zeta(B)\xi(\beta, B)))}{1+W(\zeta(G)\xi(\beta, G))}\Big)
\end{eqnarray*}
To show  $\frac{\partial\bar W(\beta)}{\partial \beta}\leq 0$, we impose the inequality 
\begin{equation}
\label{compa}
\frac {\zeta(B) W(\zeta(B)\xi(\beta, B))}{(1+W(\zeta(B)\xi(\beta, B)))^2} \leq  \frac {\zeta(G) W(\zeta(G)\xi(\beta, B))}{(1+W(\zeta(G)\xi(\beta, G)))^2}
\end{equation}
We can obtain the above inequality under assumption $\lambda_{ps}(G)N \leq \lambda_{pu}$ by letting
$$ f(y)=\frac {y W(y \frac{e^{y(1-\frac{\beta}{N})}}{(1-\frac{\alpha}{N}})}{(1+W(y \frac{e^{y(1-\frac{\beta}{N})}}{1-\frac{\alpha}{N}}))^2}$$
Hence the derivative of $f$ %(see appendix) 
can be expressed as 
\begin{equation}
\frac{\partial f}{\partial y} = w(\bar y)\frac{w^2(\bar y)+ w(\bar y) (1-y (1-\frac{\beta}{N}))+2+y (1-\frac{\beta}{N})}{(1+w(\bar y)^2}
% \frac{(1+w)^2(w+x w' (\frac{e^{y(1-\frac{\beta}{N})}}{(1-\frac{\alpha}{N}} (1+y (1-\frac{\beta}{N})  }{}
\end{equation}
where $\bar y=y \frac{e^{y(1-\frac{\beta}{N})}}{(1-\frac{\alpha}{N})}$. In fact it can be showed that $\dot f$ is positive for $y(1-\frac{\beta}{N})\leq 1$ i.e.,  $\lambda_{ps}(G)N \leq \lambda_{pu}$. 
\end{proof}

\dm{Overall, the above cases are summarized in the following theorem
\begin{thm}\label{thm:expo}
Let  $\lambda_{ps}(G)>\lambda_{ps}(B)$ and  $\lambda_{ps}(G)N \leq \lambda_{pu}$, then in the exponential case  
\begin{itemize}
\item[i)] If ${\pi_G}\leq \pi_B$ then $\beta_{\tau,B}$ is a symmetric Wardrop equilibrium
\item[ii)] If ${\pi_G}> \pi_B $ then the following cases hold
\begin{itemize}
\item[a)]  If  $\frac{\pi_G}{\pi_B} < \frac{\lambda_{ps}(G)}{\lambda_{ps}(B)} $ and  $\frac{1+W(\zeta(G)\xi(\alpha, G))}{W(\zeta(B)\xi(\alpha,B))}\geq \frac{\pi_G}{\pi_B}$ for all $\beta \in [\alpha,\beta_{\tau,B}]$  then all $0< \beta \leq \beta_{\tau,B}$ are symmetric Wardrop equilibria 
\item[b)]  If $\frac{\pi_G}{\pi_B} < \frac{\lambda_{ps}(G)}{\lambda_{ps}(B)} $ and $\frac{1+W(\zeta(G)\xi(\beta_\tau, G))}{1+W(\zeta(B)\xi(\beta_\tau, B))}\leq \frac{\pi_G}{\pi_B}$  for all $\beta \in [\alpha, \beta_{\tau,B}]$  then $\beta_{\tau,B}$  is a symmetric Wardrop equilibrium
\item[c)]  If $\frac{\pi_G}{\pi_B} < \frac{\lambda_{ps}(G)}{\lambda_{ps}(B)} $ and there exists a  $\bar \beta$ is the solution of the following equation
$$\frac{1+W(\zeta(G)\xi(\bar\beta, G))}{1+W(\zeta(B)\xi(\bar \beta, B))}=\frac{\pi_G}{\pi_B}$$
then $\bar \beta$  is a symmetric Wardrop equilibrium
\end{itemize} 
\item[iii)] If $\frac{\pi_G}{\pi_B} >  \frac{\lambda_{ps}(G)}{\lambda_{ps}(B)}$, then the following cases hold
\begin{itemize}
\item[a)] if $\frac{1+W(\zeta(G)\xi(\alpha, G))}{1+W(\zeta(B)\xi(\alpha,B))}\geq \frac{\pi_G}{\pi_B}$ for all $\beta \in [\alpha,\beta_{\tau,B}]$  then $0$ is a symmetric Wardrop equilibrium
 \item[b)] if $\frac{1+W(\zeta(G)\xi(\alpha, G))}{1+W(\zeta(B)\xi(\alpha,B))}\leq \frac{\pi_G}{\pi_B}$ for all $\beta \in [\alpha, \beta_{\tau,B}]$, then there exists a symmetric Wardrop equilibrium which is given by 
\begin{eqnarray}
\left\{
\begin{array}{cc}
0 & \mbox{ if } \tau \pi_B < \pi_G t_{\beta_{\tau,B}}(G)\\
\beta_{\tau,B}& \mbox{ if } \tau \pi_B > \pi_G t_{\beta_{\tau,B}}(G)\\
\beta^* \in \{0, \beta_{\tau,B}\} & \mbox{ if } \tau \pi_B = \pi_G t_{\beta_{\tau,B}}(G)
\end{array}
\right.
\end{eqnarray}
\end{itemize}
\end{itemize}
\end{thm}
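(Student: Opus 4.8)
The plan is to assemble the global best response $\beta^*(\alpha)$ from the two partial best responses already computed and then to locate its fixed points, since a symmetric Wardrop equilibrium is exactly a strategy $\alpha^*$ that is a best response to itself. Concretely, over $[0,\beta_{\tau,B}]$ I would write $\beta^*(\alpha)=\arg\max\{U(\alpha,\beta^*_1(\alpha)),U(\alpha,\beta^*_2(\alpha))\}$, where $\beta^*_1(\alpha)$ is the maximizer on $[0,\alpha]$ supplied by the first exponential lemma and $\beta^*_2(\alpha)$ the maximizer on $[\alpha,\beta_{\tau,B}]$ supplied by the second. The whole theorem then becomes a case analysis driven by two discriminants: the comparison of $\pi_G/\pi_B$ with $\lambda_{ps}(G)/\lambda_{ps}(B)$, which fixes $\beta^*_1$, and the comparison of $\bar W(\beta)=\frac{1+W(\zeta(G)\xi(\beta,G))}{1+W(\zeta(B)\xi(\beta,B))}$ with $\pi_G/\pi_B$, which fixes $\beta^*_2$.

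For case i) I would note that $\pi_G\le\pi_B$ forces $\pi_G/\pi_B\le 1<\lambda_{ps}(G)/\lambda_{ps}(B)$, so $\beta^*_1(\alpha)=\alpha$, while the second lemma gives $\beta^*_2(\alpha)=\beta_{\tau,B}$; evaluating at $\alpha=\beta_{\tau,B}$ both intervals collapse onto $\beta_{\tau,B}$, which is therefore a fixed point. Case ii) is the regime $\beta^*_1(\alpha)=\alpha$ with $\pi_G>\pi_B$, and its three sub-cases simply inherit the three alternatives for $\beta^*_2$: when $\beta^*_2(\alpha)=\alpha$ the map reduces to the identity and every $\alpha\in(0,\beta_{\tau,B}]$ is a fixed point (the continuum of equilibria); when $\beta^*_2(\alpha)=\beta_{\tau,B}$ the monotonicity of $U$ on $[\alpha,\beta_{\tau,B}]$ established in the preceding lemma guarantees $U(\alpha,\beta_{\tau,B})\ge U(\alpha,\alpha)$, so $\beta_{\tau,B}$ is the unique fixed point; and when $\bar W(\bar\beta)=\pi_G/\pi_B$ has an interior root, setting $\alpha=\bar\beta$ makes both partial maximizers equal to $\bar\beta$, the equilibrium. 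Case iii a) is analogous but with $\beta^*_1(\alpha)=0$: since $\pi_G/\pi_B>\lambda_{ps}(G)/\lambda_{ps}(B)$ gives $\pi_G/\lambda_{ps}(G)>\pi_B/\lambda_{ps}(B)$, a direct expansion of $U(\alpha,\alpha)$ shows $U(\alpha,0)=(\pi_G-\pi_B)\tau>U(\alpha,\alpha)$ for every $\alpha>0$, so $\beta^*(\alpha)\equiv 0$ and $0$ is the unique equilibrium.

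The delicate case is iii b), and I expect it to be the main obstacle. There the two candidate maximizers are the extremes $0$ and $\beta_{\tau,B}$, so $\beta^*$ is bimodal and one genuinely has to compare the two global utility values rather than read off a coincidence of $\beta^*_1$ and $\beta^*_2$. I would evaluate $U(\alpha,0)=(\pi_G-\pi_B)\tau$ using $t_0=0$, and $U(\alpha,\beta_{\tau,B})=\pi_G(\tau-t_{\beta_{\tau,B}}(G))$ using the defining relation $t_{\beta_{\tau,B}}(B)=\tau$; checking whether $0$ is a best response to itself then reduces, after cancelling the common $\pi_G\tau$ term, exactly to the sign test $\tau\pi_B\lessgtr\pi_G t_{\beta_{\tau,B}}(G)$ stated in the theorem, with equality yielding indifference between the two extremes. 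Two points require care: first, bookkeeping which dynamics governs $t_{\beta_{\tau,B}}$, since for $\beta\le\alpha$ only the push term $N(1-e^{-\lambda_{ps}(\theta)t})$ contributes whereas for $\beta\ge\alpha$ the pull term switches on at $t_\alpha$ and the Lambert-$W$ expression for $t_\beta$ must be used; and second, invoking the monotonicity of $\bar W$ proved earlier (valid under $\lambda_{ps}(G)N\le\lambda_{pu}$) to certify that no additional interior fixed point can arise and that the sign test selects a single equilibrium.
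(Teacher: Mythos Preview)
Your approach is correct and matches the paper's: the paper presents Theorem~\ref{thm:expo} explicitly as a summary of the two preceding lemmas (``Overall, the above cases are summarized in the following theorem''), without a separate proof, so the intended argument is precisely the combination of $\beta^*_1$ and $\beta^*_2$ that you outline. Your treatment is in fact more detailed than what the paper supplies, particularly in case~iii~b), where you correctly reduce the comparison of $U(\alpha,0)=(\pi_G-\pi_B)\tau$ and $U(\alpha,\beta_{\tau,B})=\pi_G(\tau-t_{\beta_{\tau,B}}(G))$ to the sign test in the statement, and you rightly flag the bookkeeping issue that $t_{\beta_{\tau,B}}(G)$ depends on whether it is evaluated at $\alpha=0$ (push-plus-pull dynamics via the Lambert formula) or at $\alpha=\beta_{\tau,B}$ (push only); the paper leaves this ambiguity unresolved.
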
}

\fdp{Theorem.~\ref{thm:expo} displays a structure of the best response  that is similar to the result obtained for the linear case, but 
we should highlight some differences. First, the additional request $\lambda_{ps}(G)N \leq \lambda_{pu}$
is excluding the case when the effect of the pull mechanism is negligible compared to push mechanism. 
This means that we are restricting to the case when the aggregated maximum rate at which the viewcount can increase 
due to the push mechanism is smaller than the increase that is generated once the viewcount is above threshold for pull users. 
Indeed, this is the interesting case when the content provider's aim is to attract a large basin of pull users using a 
target limited audience of push users.} 

\fdp{Second, we observe that the term  $\frac{\pi_\theta}{\lambda_{ps}(\theta)+ \Lpu}$ that was present in the linear case is 
now replaced by a term involving the Lambert function $W(\cdot)$ \cite{CorlessLambert}: this is due to the combined effect of the exponential growth and the linear growth above the threshold, accounting for the saturation of the basin of push users. In the case 
when $N$ is very large or $\lambda_{ps}$ is very small, the term collapses to the condition expressed in the linear case. }
%The existence of symmetric Wardrop equilibria is then deduced for all cases when $\alpha=\beta^*(\alpha)$.} 

%%%%%%%%%%%%%%%%%%%%%%%%%%%%%%%%%%%%%%%%%%%%%%%%%%%%%%%%%%%%%
\section{Variable time horizon}\label{sec:gt2}
%%%%%%%%%%%%%%%%%%%%%%%%%%%%%%%%%%%%%%%%%%%%%%%%%%%%%%%%%%%%%

In this section, we are interested in the case where the time horizon \fdp{during which the content is accessed by pull users is not fixed. But, it is determined by the popularity of the content and by the  quality perceived by users. In particular, when the popularity of a content is subject to saturation, we can model a vanishing $\dot X$ to encode the condition when a content which is present online for a long time becomes stale. Conversely, fresh uptaking contents will experience  large values of $\dot X$ and will be preferred.} \fdp{This case fits well specific types of contents such as news or pop songs, for which the {\em trend} of the viewcount increase may be the main trigger for the users' interest in some content. Pull users still adopt a threshold strategy and browse the content if}
\begin{equation}\label{eq:threxpo}
\dot X(t,\theta) \geq \gamma_{th}
\end{equation}
Let us consider the exponential push case introduced in the previous section. Condition \eqref{eq:threxpo} determines a variable horizon to access content $\theta$:
$$
\tau(\alpha, \theta) =\dot X^{-1}( \gamma_{th})
$$
Because the time horizon $\tau=\infty$ for $ \gamma_{th}\leq \lambda_{pu}$, we restrict our analysis to the case when $\gamma_{th}>\lambda_{pu}$.

 \fdp{Again, we are interested to compute the utility function for a tagged user given a certain common threshold strategy $\alpha$ played by other 
 users; the objective is to compute the best response $\beta$ for the tagged user as done before}. Let   $X_{th} (\theta)= N-\frac{\gamma_{th}}{\lambda_{ps}(\theta)}$, $\tau_0(\theta) =     \frac{1}{\lambda_{ps}(\theta)} \log\big(\frac{\lambda_{ps}(\theta) N}{\gamma_{th}}\Big)$ and $\tau_1( \theta) =    \frac{1}{\lambda_{ps}(\theta)} \log\big(\frac{\lambda_{ps}(\theta) N}{\gamma_{th}-\lambda_{pu}}\Big)$. 
 
 \fdp{Observe that the interval of time when pull users will access the content becomes now $[\tau_0(\theta)\, \tau_1(\theta)]$: the duration of 
 such interval corresponds to the useful lifetime of the content as dictated by the interest of the users based on \eqref{eq:threxpo} and by the content type.} 
 
\fdp{We distinguish again two intervals, namely $0\leq \beta \leq \alpha$ and $\alpha\leq \beta \leq \tau$, and denote $\beta_1^*$ and $\beta_2^*$ the best response in those intervals, respectively. However, we need to account also for \eqref{eq:threxpo} and to detail the utility accordingly.}

It follows that if $\beta\geq \alpha$, then 
$$
U(\alpha,\beta)= \pi_G \Big(\tau_1(G) -t_\beta (G)\Big)^+ - \pi_B \Big(\tau_1(B) -t_\beta (B)\Big)^+
$$
 
If $\alpha > X_{th} (G)$ and $\beta\leq \alpha$
\begin{eqnarray*}
U(\alpha,\beta)&=& \pi_G \Big(\tau_0(G) -t_\beta (G)\Big)^+ + \pi_G  \Big(\tau_1(G)-t_\alpha(G)\Big)^+ \\
&&- \pi_B \Big(\tau_0(B) -t_\beta (B)\Big)^+ - \pi_B  \Big(\tau_1(B)-t_\alpha(B)\Big)^+
\end{eqnarray*}
If $X_{th}(B)\leq \alpha \leq X_{th} (G)$ and $\beta\leq \alpha$, then
\begin{eqnarray*}
U(\alpha,\beta)&=& \pi_G \Big(\tau_1(G) -t_\beta (G)\Big) \\
&&- \pi_B \Big(\tau_0(B) -t_\beta (B)\Big)^+ - \pi_B  \Big(\tau_1(B)-t_\alpha(B)\Big)^+
\end{eqnarray*}
 If $X_{th}(B)\geq \alpha$ and $\beta\leq \alpha$, then
\begin{eqnarray*}
U(\alpha,\beta)&=& \pi_G \Big(\tau_1(G) -t_\beta (G)\Big)  - \pi_B \Big(\tau_1(B)-t_\alpha(B)\Big)
\end{eqnarray*}
\begin{figure*}[t]
  \centering
  \subfigure{\includegraphics[width=0.30\textwidth]{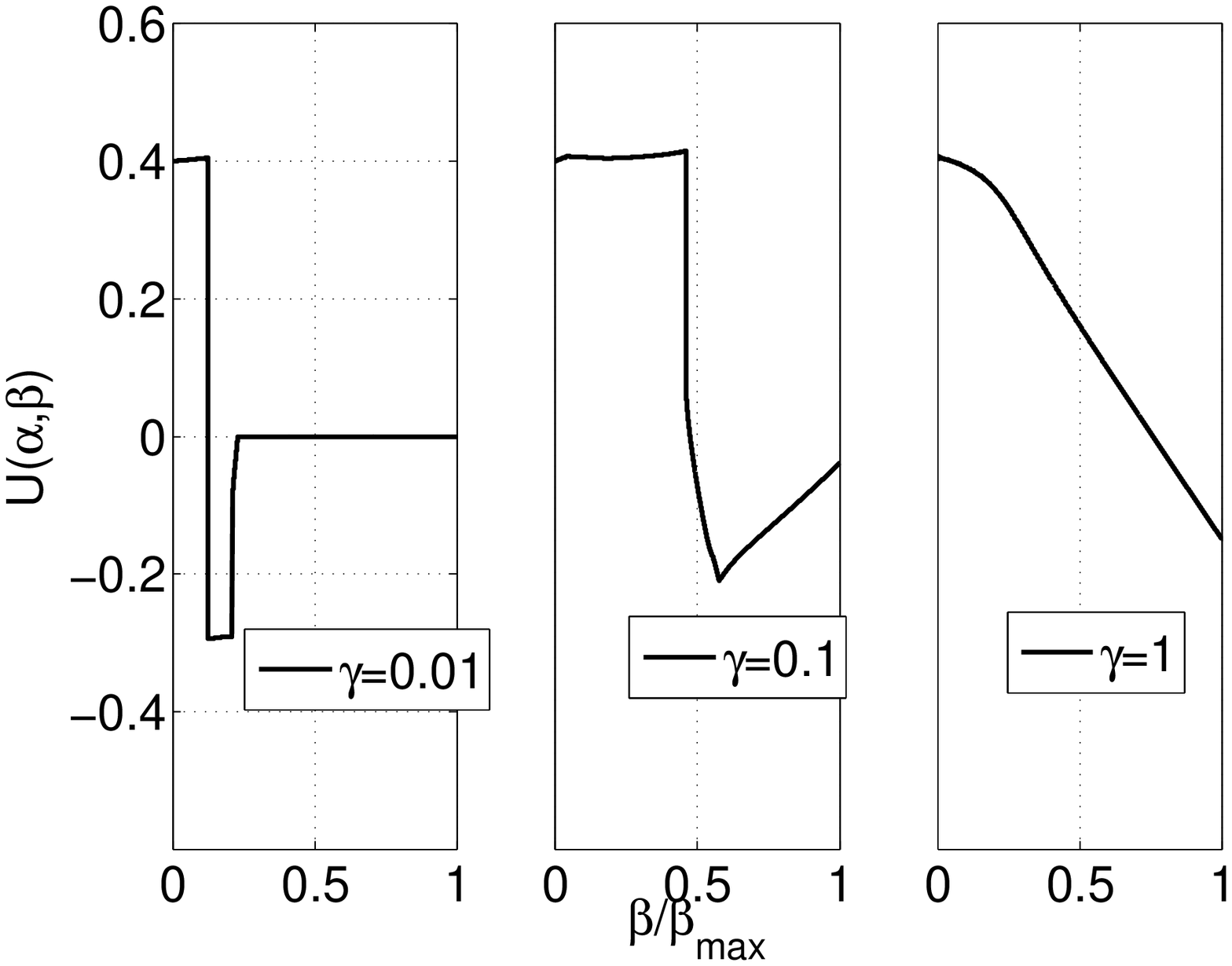}}
\subfigure{\includegraphics[width=0.30\textwidth]{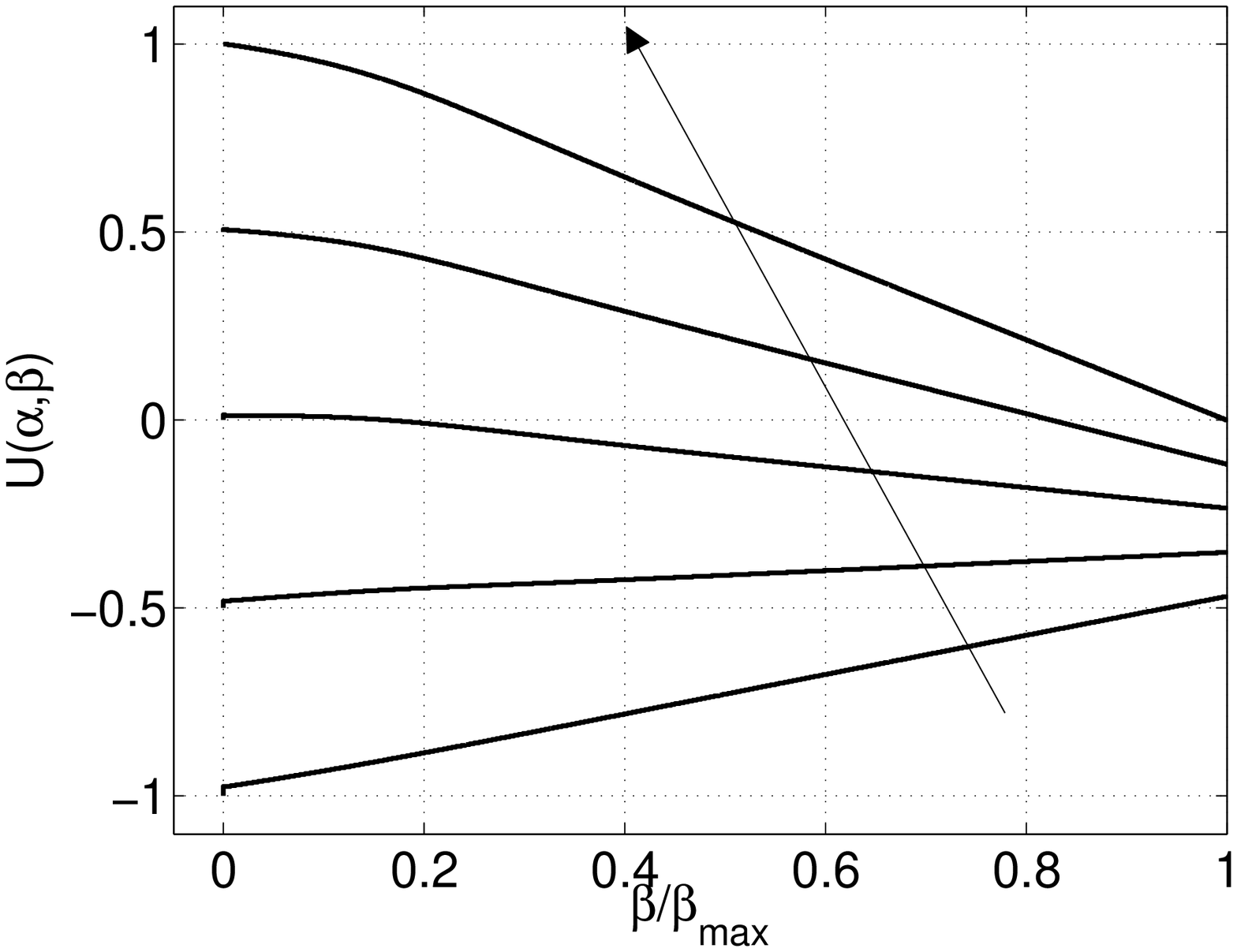}\put(-85,22){\scriptsize $\pi_G=0,0.25,0.5,0.75, 1$}}
\subfigure{\includegraphics[width=0.31\textwidth]{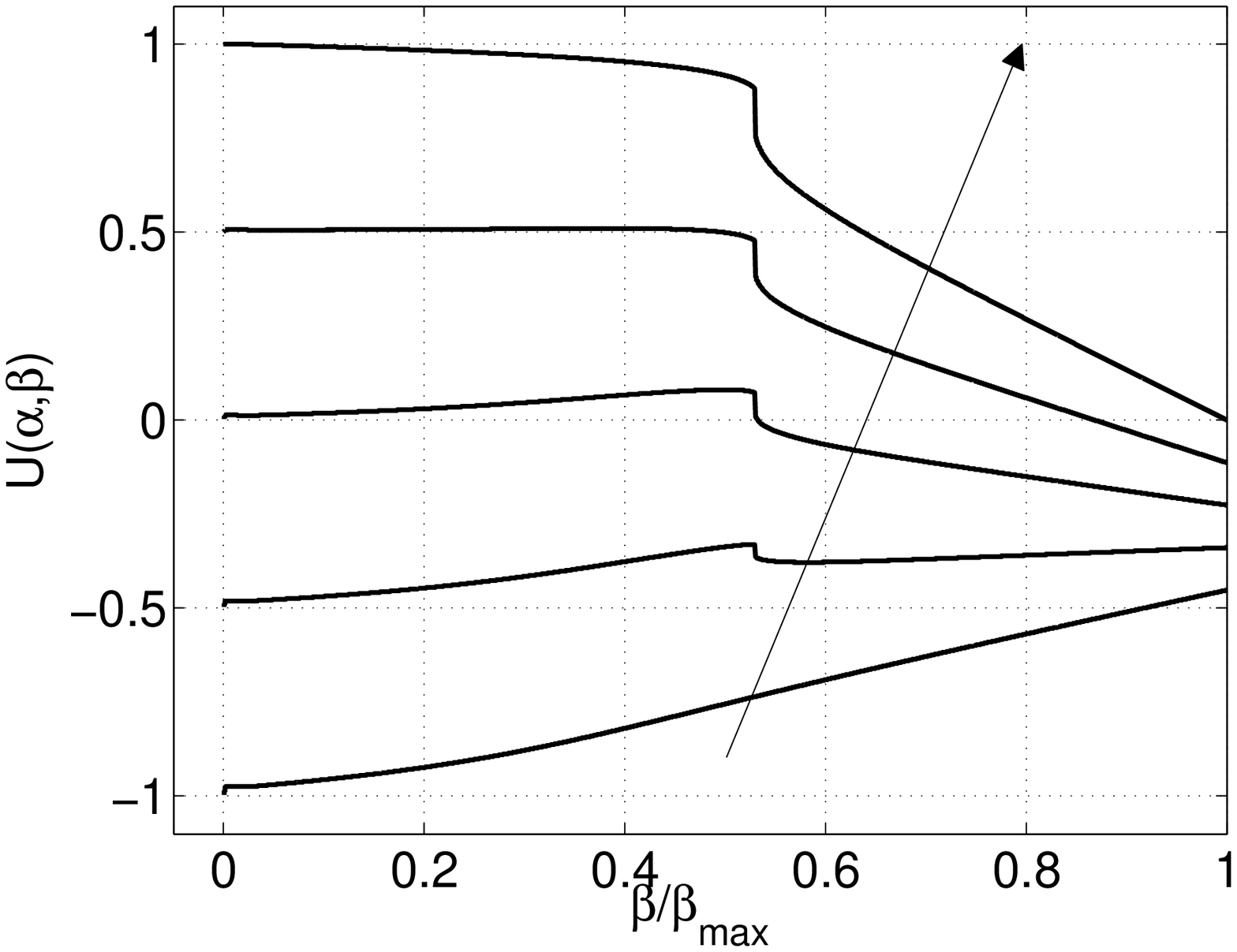}\put(-90,19){\scriptsize $\pi_G=0,0.25,0.5,0.75, 1$}
 \put(-155,110){(c)}\put(-315,110){(b)}\put(-480,110){(a)}}
%\put(0,0){(a)}\put(50,0){(b)}\put(100,0){(c)}
\caption{The utility function for $N=1000$, for $\tau=10$ days, $\Lps(G)=10^{-1}$ views/day, $\Lps(B)=\Lps(G)/10$. a) Detail of the discontinuities of $U(\alpha,\beta)$ for $\gamma=0.01,0.1,1$, where $\alpha=0.18$ b) Extremal type of best response for $\alpha=0.029$, $\gamma=1.5$ and under increasing values of the belief $\pi_G$. c) Same as b) but for $\gamma=0.3$. Discontinuity in $\alpha$ corresponds to local maxima for $\pi_G=0.25,0.50$.}\label{fig:expo_xxdot}
\end{figure*}

\dm{With a similar analysis as that employed in the proof of  Thm.\ref{thm:exp_thr}, we can write:}
\begin{thm}\label{thm:exp_thr}
In the exponential case, under the assumptions $\lambda_{ps}(G)>\lambda_{ps}(B)$ and  $\lambda_{ps}(G)N \leq \lambda_{pu}$, it holds
\begin{itemize}
\item \fdp{If ${\pi_G}\leq \pi_B $ then $\beta$   is a symmetric Wardrop equilibrium where $\beta=\beta_\tau(B)$ is solution of $t_{\beta}(B)=\tau $}
\item \dm{ If ${\pi_G}> \pi_B $,  $\frac{\pi_G}{\pi_B} < \frac{\lambda_{ps}(G)}{\lambda_{ps}(B)} $ and  $\frac{1+W(\zeta(G)\xi(\alpha, G))}{1+W(\zeta(B)\xi(\alpha,B))}\geq \frac{\pi_G}{\pi_B}$ for all $\beta \in [\alpha, \beta_{\tau_{0,B}}]$  then all values in the interval  $[0, \tilde\beta] $ are symmetric Wardrop equilibria.}

\item  \dm{ If ${\pi_G}> \pi_B $,   $\frac{\pi_G}{\pi_B} < \frac{\lambda_{ps}(G)}{\lambda_{ps}(B)} $ and $\frac{1+W(\zeta(G)\xi(\beta_\tau, G))}{1+W(\zeta(B)\xi(\beta_\tau, B))}\leq \frac{\pi_G}{\pi_B}$  for all $\beta \in [\alpha,\beta_{\tau_{0,B}}(B)]$  then $\tilde \beta$ is a  symmetric Wardrop equilibrium.}

\item \dm{ If ${\pi_G}> \pi_B $,  $\frac{\pi_G}{\pi_B} < \frac{\lambda_{ps}(G)}{\lambda_{ps}(B)} $ and there exists $ \beta_s$ solution of the following equation
$$\frac{1+W(\zeta(G)\xi(\beta_s, G))}{1+W(\zeta(B)\xi( \beta_s, B))}=\frac{\pi_G}{\pi_B}$$
then $\beta_s$ is a  symmetric Wardrop equilibria.}
\dm{\item  If $\frac{\pi_G}{\pi_B} >  \frac{\lambda_{ps}(G)}{\lambda_{ps}(B)} $ and  and  $\frac{1+W(\zeta(G)\xi(\alpha, G))}{1+W(\zeta(B)\xi(\alpha,B))}\geq \frac{\pi_G}{\pi_B}$ for all $\beta \in [\alpha,\beta_{\tau_{0,B}}]$  then $0$  is a  symmetric Wardrop equilibrium.
\item  If $\frac{\pi_G}{\pi_B} >  \frac{\lambda_{ps}(G)}{\lambda_{ps}(B)} $ and  and  $\frac{1+W(\zeta(G)\xi(\alpha, G))}{1+W(\zeta(B)\xi(\alpha,B))}\leq \frac{\pi_G}{\pi_B}$ for all $\beta \in [\alpha, \beta_{\tau_{0,B}}]$  then there exists a  symmetric Wardrop equilibrium which is given by}
\begin{eqnarray}
\left\{
\begin{array}{cc}
0 & \mbox{ if } \tau \pi_B < \pi_G t_{\beta_{\tau_{0,B}}}\\
\beta_\tau(B)  & \mbox{ if } \tau \pi_B > \pi_G t_{\beta_{\tau_{0,B}}}\\
\beta^* \in \{0, \beta_{\tau_0}(B)\} & \mbox{ if } \tau \pi_B = \pi_G t_{\beta_{\tau{0,B}}}
\end{array}
\right.
\end{eqnarray}
\end{itemize} 
\end{thm}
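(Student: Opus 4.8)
The plan is to reuse the two-region dissection already carried out for Theorem~\ref{thm:expo}: separately determine the best response $\beta_1^*(\alpha)$ on $[0,\alpha]$ and $\beta_2^*(\alpha)$ on $[\alpha,\cdot]$, then pick $\beta^*(\alpha)$ as whichever of the two yields the larger utility, and finally declare $\alpha$ a symmetric Wardrop equilibrium precisely when $\beta^*(\alpha)=\alpha$. The only structural novelty with respect to the fixed-horizon case is that the useful access windows are now $[\tau_0(\theta),\tau_1(\theta)]$ instead of $[0,\tau]$, and that $U$ must be read through the regimes fixed by the position of $\alpha$ relative to $X_{th}(G)$ and $X_{th}(B)$, as listed in the four displays preceding the statement. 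The observation that keeps the best-response structure identical to the exponential case is that, within each such regime, the dependence of $U$ on $\beta$ enters only through $-\pi_G t_\beta(G)+\pi_B t_\beta(B)$, every other term ($\tau_0,\tau_1,t_\alpha$) being constant in $\beta$.

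For $\beta\le\alpha$ the tagged user crosses the threshold before the pull traffic switches on, so the relevant dynamics is pure push and $t_\beta(\theta)=-\frac1{\Lps(\theta)}\log(1-\beta/N)$. Differentiating, $\partial_\beta U$ keeps the constant sign of $\frac{\pi_B}{\Lps(B)}-\frac{\pi_G}{\Lps(G)}$ regardless of the regime, so $U$ is monotone on $[0,\alpha]$ and $\beta_1^*(\alpha)$ equals $\alpha$ or $0$ according as $\frac{\pi_G}{\pi_B}$ falls below or exceeds $\frac{\Lps(G)}{\Lps(B)}$, exactly as in the first lemma of the exponential case.

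For $\beta\ge\alpha$ the pull rate is active, $t_\beta$ is the Lambert branch computed above, and the derivative of $U$ (the analogue of~(\ref{derivative})) reads $\partial_\beta U=\frac1{\Lpu}\big(\frac{\pi_B}{1+W(\zeta(B)\xi(\beta,B))}-\frac{\pi_G}{1+W(\zeta(G)\xi(\beta,G))}\big)$, whose sign is that of $\bar W(\beta)-\frac{\pi_G}{\pi_B}$ with $\bar W(\beta)=\frac{1+W(\zeta(G)\xi(\beta,G))}{1+W(\zeta(B)\xi(\beta,B))}$. Invoking the monotonicity $\partial_\beta\bar W\le0$ established for Theorem~\ref{thm:expo} under $\Lps(G)N\le\Lpu$, the utility is unimodal on the interval, so $\beta_2^*(\alpha)$ is either $\alpha$, the right endpoint $\tilde\beta$, or the unique interior root of $\bar W(\beta)=\frac{\pi_G}{\pi_B}$. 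The opening case $\pi_G\le\pi_B$ needs no monotonicity argument: since $\xi(\beta,G)>\xi(\beta,B)$ and $\zeta(G)>\zeta(B)$ force $\partial_\beta U>0$, $U$ is increasing and the best response is the latest admissible access $\beta_\tau(B)$.

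It then remains to glue the two pieces and impose self-consistency. Comparing $U(\alpha,\beta_1^*)$ with $U(\alpha,\beta_2^*)$ across the cut $\beta=\alpha$ produces the global best response, and solving $\beta^*(\alpha)=\alpha$ reproduces the listed equilibria: a continuum $[0,\tilde\beta]$ whenever the best response coincides with $\alpha$ on a whole interval, the endpoint $\tilde\beta$ when $U$ is monotone increasing, the interior fixed point $\beta_s$ solving $\bar W(\beta_s)=\frac{\pi_G}{\pi_B}$, and, in the regime $\frac{\pi_G}{\pi_B}>\frac{\Lps(G)}{\Lps(B)}$, the dichotomy between $0$ and $\beta_\tau(B)$ decided by the sign of $\tau\pi_B-\pi_G t_{\beta_{\tau_{0,B}}}(G)$, which is nothing but the comparison of the candidate utilities $U(\alpha,0)$ and $U(\alpha,\beta_\tau(B))$. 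I expect the real difficulty to lie not in these monotonicity arguments, which transfer essentially verbatim from the fixed-horizon proof, but in the bookkeeping forced by the variable horizon: the clipping $(\cdot)^+$ can switch off either the good or the bad contribution over sub-intervals, and since the windows $[\tau_0(\theta),\tau_1(\theta)]$ differ for $\theta=G,B$ the boundaries $\alpha=X_{th}(G)$ and $\alpha=X_{th}(B)$ must be tracked. These are exactly the discontinuities of $U$ at $\beta=\alpha$ seen in Fig.~\ref{fig:expo_xxdot}; the care needed is to verify that at each jump the one-sided derivatives retain the signs above, so that $\beta_1^*$ and $\beta_2^*$ stay the only maximizers and the final comparison is well posed.
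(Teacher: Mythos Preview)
Your proposal is correct and follows essentially the same approach as the paper. The paper's own proof of this theorem is in fact just the one-line remark ``with a similar analysis as that employed in the proof of Thm.~\ref{thm:expo}'' (the self-reference to Thm.~\ref{thm:exp_thr} in the text is evidently a typo); you have simply fleshed out that hint, correctly isolating that the $\beta$-dependence of $U$ in each regime reduces to $-\pi_G t_\beta(G)+\pi_B t_\beta(B)$ up to constants, so the two lemmas on $\beta_1^*$ and $\beta_2^*$ and the monotonicity of $\bar W$ carry over verbatim, with only the endpoint bookkeeping from the $(\cdot)^+$ clipping and the $X_{th}(\theta)$ boundaries requiring extra care.
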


\fdp{The overall result in Thm.\ref{thm:exp_thr} shows a structure that is close to that obtained in Thm.~\ref{thm:expo}. We can conclude 
that the presence of a selective preference expressed in terms of the viewcount trend does not affect the structure of the Wardrop 
equilibria. In fact, they are of the kind determined before in the case of a fixed length interval: either extremal ones or a continuum of such restpoints. It is interesting to notice that this is following irrespective of the fact that the utility function is linear as a function of the "viewing time", i.e., the time that is useful for the viewers, but, pull users' preferences depend on a non-linear function of the threshold type.}

%%%%%%%%%%%%%%
%%%%%%%%%%%%%%%%%%%%%%%%%%%%%%%%%%%%%%%%%%%%%%%%%%%%%%%%%%%%%%%%%%%%%%%%%%%%%%%%%%%%%%%%%%%%%%%%%%%%%%%%%%%%%%%%%%%%%%%%
%%%%%%%%%%%%%%%%%%%%%%%%%%%%%%%%%%%%%%%%%%%%%%%%%%%%%%%%%%%%%%%%%%%%%%
\section{Combined effect of Trend and Viewcount}\label{sec:gt3}
%%%%%%%%%%%%%%%%%%%%%%%%%%%%%%%%%%%%%%%%%%%%%%%%%
%%%%%%%%%%%%%%%%%%%%%%%%%%%%%%%%%%%%%%%%%%%%%%%%%%%%%%%%%%%%%%%%%%%%%%%%%%%%%%%%%%%%%%%%%%%%%%%%%%%%%%%%%%%%%%%%%%%%%%%%

In general, contents that are present online since a long time display different 
popularity than contents which last only a short time \cite{SzaboPop}. As we noticed in the previous 
section, when popularity saturation occurs, $\dot X$ vanishes for large $t$. \fdp{If users choose among contents with 
different trend and different viewcount, they would naturally choose a content with large viewcount and large increasing trend.  
To this respect $y(t)=\dot X(t) X(t)$ encodes the condition when the pull user still values the viewcount, but, she favors a large 
increasing trend given two contents with the same viewcount.}

Symmetric equilibria can be determined when in the system all users adopt a strategy 
\[
\alpha := y(t_\alpha), \quad 0 \leq t_\alpha \leq \tau 
\]
and again we determine the best response for a user deviating using $\beta := y(t_\beta)$ as a reply, where  $0 \leq t_\beta \leq \tau$.

It is easy to see that in the linear case, the model developed in the previous section 
applies as long as one replaces the dynamics with the one below 
\[
\Xps(t,\theta)=\Lps^2(\theta)t+\Lps(\theta), \quad \Xpu=\Lpu^2(t-t_{\alpha})\cdot \One(t-t_{\alpha})
\]
so that all the results can be specialized accordingly replacing $\Lps$ and $\Lpu$ with $\Lps^2$ and $\Lpu^2$
wherever they appear. The intuition is that when the regime of content diffusion is linear, i.e., 
when a large number of push users exists, the trend of popularity has the only effect to reinforce 
the inequality $\Lps(B)\not =\Lps(G)$. We then move to a more interesting case.

%%%%%%%%%%%%%%%%%%%%%%%%%%%%%%%%%%%%%%%%%%%%%
\subsection{Exponential push case}
%%%%%%%%%%%%%%%%%%%%%%%%%%%%%%%%%%%%%%%%%%%%%
%%%%%%%%%%%%%%%%%%%%%%%%%%%%%%%%%%%%%%%%%%%%%%%%%%%%%%%%%%%%%%%%%%%%%%%%%%%%%%%%%%%%%%%%%%%%%%%%%%%%%%%%%%%%%%%%%%%%%%%%

In the exponential case, the dynamics again is the same captured by (\ref{expo1}), (\ref{expo2}). We 
can specialize the analysis to the two cases as done before. If {$\alpha \geq \beta$}, $y(t_\beta)=\beta$ 
implies that
\[
\beta=\Lps(\theta) N^2(1-e^{-\Lps(\theta) t_\beta})e^{-\Lps(\theta) t_\beta}
\]
where the solution is such that $t_\beta=-\frac1{\lambda(\theta)} f(\beta,\theta)$ where we let $f(\beta,\theta):=\log \Big ( \frac 12 \Big ( 1 + \sqrt{1 - \frac{4\beta}{\Lps(\theta)N^2}} \Big ) \Big )$. 
\[
U(\alpha,\beta)=(\pi_G-\pi_B)\tau + \Big ( \frac{\pi_G}{\Lps(G)}f(\beta,G) - \frac{\pi_B}{\Lps(B)}f(\beta,B) \Big )
\]
After observing that $f(0,\theta)=0$ and $f(\beta,G)\leq f(\beta,B) \leq 0$, again we obtain two extremal cases: 
when $\frac{\pi_G}{\Lps(G)} \geq \frac{\pi_B}{\Lps(B)}$ then $U(\alpha,\beta)-(\pi_G-\pi_B)\tau\leq 0$ so that 
$\beta=0$ maximizes the utility. In the opposite case, namely, $\frac{\pi_G}{\Lps(G)} \leq \frac{\pi_B}{\Lps(B)}$, $U(\alpha,\beta)-(\pi_G-\pi_B)\tau\geq 0$, so that $\beta=\alpha$ does. 

If $\alpha \leq \beta$, the condition for 
\[
\beta=X(t) \Big ( \Lps(\theta) N e^{-\Lps(\theta)} + \Lpu \Big ) 
\]
% \[
% t_\beta = t_\alpha - \frac N{\Lpu} \Big [ 1 - f(\beta,\theta) e^{\xi(\theta)} e^{-W\big (f(\beta,\theta) \xi(\theta)e^{-\xi(\theta)}\big )} \Big ]
% \]
% where we recall $\xi(\theta)=\frac{\Lps(\theta)N}{\Lpu}$. Using the definition of $We^W=x$, can obtain the following expression 
% for the utility function: 
% \begin{eqnarray}\label{eq:utexpxdotx}
% U(\alpha,\beta)&&=U_0(\alpha)+\frac N{\Lpu} \Big [ \pi_B \frac{W(f(\beta,B)\xi(B)e^{-\xi(B)})}{\xi(B)e^{-2\xi(B)}}\nonumber\\
%                &&- \pi_G \frac{W(f(\beta,G)\xi(G)e^{-\xi(G)})}{\xi(G)e^{-2\xi(G)}} \Big ]
% \end{eqnarray}
% where $U_0(\alpha)=(\pi_G-\pi_B)(\tau + \frac{N}{\Lpu})- (\pi_G t_{\alpha(G)} - \pi_B t_{\alpha(B)})$. 
% eed to observe that the expression holds only for large values of 
\begin{eqnarray}\label{eq:utexpxdotx}
\mbox{gives:} \quad t_\beta %&&=t_\alpha(\theta) - \frac N{\Lpu} \Big [ 1 - f(\beta,\theta) e^{\xi(\theta)} e^{-W\big (f(\beta,\theta) \xi(\theta)e^{-\xi(\theta)}\big )} \Big ]\nonumber \\
      =t_\alpha(\theta) - \frac N{\Lpu} \Big [1- \frac{W(f(\beta,B)\xi(B)e^{-\xi(B)})}{\xi(B)e^{-2\xi(B)}} \Big ]\nonumber
\end{eqnarray}
where we used the definition of $We^W=x$ and we stressed the dependence of $t_\alpha$ on $\theta$. It is important to notice that in this case, $t_\beta$ is not continuous, so that in correspondence of $t_\alpha(G)$ and $t_\alpha(B)$ the utility function has possibly two discontinuities. We reported in Fig.~\ref{fig:expo_xxdot}(a) the shape of the utility function for increasing values of $\gamma=0.01,0.1,1$ $\Lpu =\gamma N \Lps(G)$. For larger values of $\gamma$ the effect of discontinuities becomes negligible with respect to the shape of the utility function (indeed we are looking for the best response, i.e., the maximum of $U(\alpha,\beta)$). 

In particular, we observe in Fig.~\ref{fig:expo_xxdot}(b) that for the choice of parameters there, i.e., $\gamma=1.5$, the shape of the utility function leads again to the customary extremal type of best response that we observed in the linear case. That is, access at time $t=0$, i.e., $\beta=\beta_{\max}$ for large $\pi_G$ and access at time $t_\beta=\tau$, i.e., $\beta=0$ for smaller values of $\pi_G$. However, for $\gamma=0.3$, see Fig.~\ref{fig:expo_xxdot}(c), we find Wardrop equilibria ($\beta^*(\alpha)=\alpha$) in the interior of $[0,\beta_{\max}]$. Further numerical exploration \fdp{confirmed that the equilibria form an interval. Thus, again, we find that there exist  conditions (in this case, smaller $\Lpu$) when the system has a continuum of equilibria as in previous cases. }

%%%%%%%%%%%%%%%%%%%%%%%%%%%%%%%%%%%%%%%%%%%%%%%%%%%%%%%%%%%%%%%%%%%%%%%%%%%%%%%%%%%%%%%%%%%%%%%%%%%%%%%%%%%%%%%%%%%%%%%%
%%%%%%%%%%%%%%%%%%%%%%%%%%%%%%%%%%%%%%%%%%%%%%%%%%%%%%%%%%%%%%%%%%%%%%%%%%%%%%%%%%%%%%%%%%%%%%%%%%%%%%%%%%%%%%%%%%%%%%%%
\begin{figure}[t]
  \centering
      \includegraphics[width=0.30\textwidth]{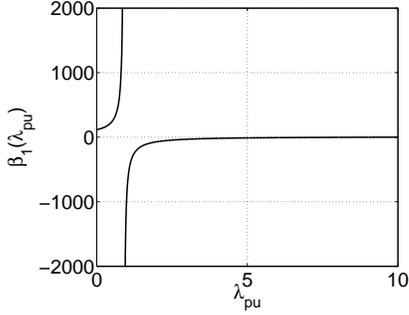}      
  \caption{The shape of function $\beta_1(\Lpu)$ for increasing values of $\Lpu$: the vertical asymptote corresponds to the value $\Lpu^s$.}\label{fig:betauno}
\end{figure}
%%%%%%%%%%%%%%%%%%%%%%%%%%%%%%%%%%%%%%%%%%%%%%%%%%%%%%%%%%%%%%%%%%%%%%%%%%%%%%%%%%%%%%%%%%%%%%%%%%%%%%%%%%%%%%%%%%%%%%%%

%%%%%%%%%%%%%%%%%%%%%%%%%%%%%%%%%%%%%%%%%%%%%%%%%%%%%%%%%%%%%%%%%%%%%%%%%%%%%%%%%%%%%%%%%%%%%%%%%%%%%%%%%%%%%%%%%%%%%%%%
%%%%%%%%%%%%%%%%%%%%%%%%%%%%%%%%%%%%%%%%%%%%%%%%%%%%%%%%%%%%%%%%%%%%%%
\section{Users with side information}\label{sec:sideinfo}
%%%%%%%%%%%%%%%%%%%%%%%%%%%%%%%%%%%%%%%%%%%%%%%%%%%%%%%%%%%%%%%%%%%%%%
%%%%%%%%%%%%%%%%%%%%%%%%%%%%%%%%%%%%%%%%%%%%%%%%%%%%%%%%%%%%%%%%%%%%%%%%%%%%%%%%%%%%%%%%%%%%%%%%%%%%%%%%%%%%%%%%%%%%%%%%

\fdp{In the previous section we have considered the product of the trend and magnitude of the viewcount as 
a metric: as seen there, the structure of the equilibria that we can expect resembles closely what we found in the previous 
cases: either extremal Wardrop equilibria or a continuum of restpoints. We want to describe  the case when potential viewers 
 may be provided additional information on the upcoming popularity of a certain content, e.g., relying on some predictors or some apriori information they have. 
They judge whether to access or not a given content based on the product of the popularity $X$ and the 
popularity trend $\dot X$. But, they only know how such metric is going to accumulate over time, i.e., the metric for a user that 
approaches the content at time $t$ is }
\begin{equation}%\label{eq:cumulative}
y(t)=\int_{t}^{\tau} X(u) \dot X(u) du = \frac 12 (X^2(\tau) - X^2(t))  \nonumber
\end{equation}
\fdp{This metric can be used as a simple benchmark case: it contains information on the future dynamics of $X(\theta)$, 
 and it is defined by the current and the final values of the viewcount. However, the amount of such information in general is not 
sufficient at time $t$ to state the type of the content. Of course, more sophisticated metrics are possible. Nevertheless,
 the one at hand will do for the purpose of showing that by making the potential viewers of a 
 content aware of some side information, the system may experience a deep change in the structure 
 of the equilibria.}
 
Let all users adopt strategy 
\[
\alpha := y(t_\alpha), \quad 0 \leq t_\alpha \leq \tau 
\]
and in the same way as done before we want to determine the best response for a user adopting 
$\beta := y(t_\beta)$ as a reply, where  $0 \leq t_\beta \leq \tau$.

In the case $\beta \geq \alpha$, we recall that the dynamics is 
\[
X(t,\theta)=\alpha + \lambda(\theta)(t-t_{\alpha})
\]
where $\L(\theta):=(\Lpu + \Lps(\theta))$ for the sake of notation, so that 
\[
\alpha+\L(\theta)(t_{\beta}-t_{\alpha})=\sqrt{X^2(\tau,\theta)-2\beta}
\]
which solves for $\displaystyle t_{\beta}=\frac 1{\L(\theta)} \Big ( \alpha \frac{\Lpu}{\Lps(\theta)} + \sqrt{X^2(\tau,\theta)-2\beta} \Big)$.
The corresponding expression for the utility is $ U(\alpha,\beta) =$
\begin{eqnarray}
U_0(\alpha,\beta)-\left [ \frac{\pi_G \sqrt{X^2(\tau,G)-2\beta} }{\L(G)}- \frac{\pi_B \sqrt{(X^2(\tau,B)-2\beta)}}{\L(B)} \right ] \nonumber
\end{eqnarray}
where the term $U_0(\alpha,\beta)=(\pi_G-\pi_B)\tau - \alpha \Lpu \Big (  \frac{\pi_G}{\Lps(G)\L(G)}- \frac{\pi_B}{\Lps(B)\L(B)} \Big )$ 
 and it turns out that 
\[
\frac {dU(\alpha,\beta)}{d\beta} =\frac{\pi_G }{\L(G)(X^2(\tau,G)-2\beta)^{\frac 12} }- \frac{\pi_B }{\L(B)(X^2(\tau,B)-2\beta)^{\frac 12}}
\]
which is \fdp{decreasing with $\beta \in [-\infty,\beta_{\tau,B}]$, where $\beta_{\tau,B}:=\frac 12 X(\tau,B)$ as follows by comparing the ratio of the two positive terms appearing in the expression above under the assumption $X(\tau,G) \geq X(\tau,B)$)}. When $\frac{\pi_G }{\L(G)}\not = \frac{\pi_B }{\L(B)}$ the $U(\cdot,\beta)$ over $\mathbb R$ attains a unique maximum at
\[
\beta_1=\frac 12 \frac{- X^2(\tau,G)\big ( \frac{\pi_B}{\L(B)} \big )^2 + X^2(\tau,B)\big ( \frac{\pi_G}{\L(G)} \big )^2 }{\big ( \frac{\pi_G}{\L(G)} \big )^2 - \big ( \frac{\pi_B}{\L(B)} \big )^2}
\]
so that there exists also one maximum of $U(\alpha,\beta)$ in $[t_\alpha,\tau]$. 

We can distinguish three cases based on the fact that 
\begin{enumerate}
%\item $t_1<t_\alpha$: the best response in this case is $\beta^*(\alpha)=\alpha$.
%\item $t_\alpha\leq t_1\leq \tau$: the best response is $\beta^*(\alpha)=\beta_1$.
%\item $t_1\geq \tau$: the best response in this case is $\beta^*(\alpha)=y(\tau,\theta)$.
\item $\beta_1\leq \alpha$: the best response in this case is $\beta^*(\alpha)=\alpha$
\item $\alpha<\beta_1<\beta_{\tau,B}$: the best response is $\beta^*(\alpha)=\beta_1$
%\item $\beta_1\geq \frac 12 X^2(\tau,B)$: the best response in this case is $\beta^*(\alpha)=\frac 12 X^2(\tau,B)$.
\item $\beta_1\geq \beta_{\tau,B}$: the best response in this case is $\beta^*(\alpha)=\beta_{\tau,B}$.
\end{enumerate}
\fdp{Finally, we notice that when $\frac{\pi_G }{\L(G)} = \frac{\pi_B }{\L(B)}$, case 1) applies.} 

\fdp{In the case $\beta < \alpha$, we can derive a similar analysis starting from the dynamics $X(t,\theta)=\Lps(\theta) t$, so that}
\[
\beta=y(t_\beta)=\frac 12 \Big ( X^2(\tau,\theta) - \Lps^2(\theta) t_\beta^2 \Big )
\]
so that $t_\beta=\sqrt{X^2(\tau,\theta)-2\beta}$, and 
\begin{eqnarray}
&&U(\alpha,\beta)=(\pi_G-\pi_B)\tau \nonumber \\
&&- \left [ \frac{\pi_G \sqrt{X^2(\tau,G)-2\beta} }{\Lps(G)}- \frac{\pi_B \sqrt{(X^2(\tau,B)-2\beta)}}{\Lps(B)} \right ] \nonumber
\end{eqnarray}
\fdp{In turn, we can recognize} the same structure for the best response as in the previous case, where the maximum of $U(\cdot,\beta)$ (when $\frac{\pi_G }{\Lps(G)}\not= \frac{\pi_B }{\Lps(B)}$), over $\mathbb R$ is attained at
\[
\beta_2=\frac 12 \frac{- X^2(\tau,G)\big ( \frac{\pi_B}{\Lps(B)} \big )^2+ X^2(\tau,B)\big ( \frac{\pi_G}{\Lps(G)} \big )^2 }{\big ( \frac{\pi_G}{\Lps(G)} \big )^2 - \big ( \frac{\pi_B}{\Lps(B)} \big )^2}
\]
and the three cases write
\begin{enumerate}
\item $\beta_2\leq 0$: the best response in this case is $\beta^*(\alpha)=0$.
\item $0 < \beta_2 < \alpha$: the best response is $\beta^*(\alpha)=\beta_2$.
\item $\beta_2 \geq \alpha$: the best response is $\beta^*(\alpha)=\alpha$.
\end{enumerate}
Again,  when $\frac{\pi_G }{\Lps(G)}= \frac{\pi_B }{\Lps(B)}$, case 1) applies. 

\fdp{Now, to complete our analysis, we need to determine the best response between the two cases: we need to detail the relation between $\beta_1$ and $\beta_2$. To so do we can rewrite for the sake of convenience }
\[
\beta_1(x)=\frac 12 \frac{\pi_G^2 x^2 X^2(\tau,B) - \pi_B^2 (L+x)^2 X^2(\tau,G)}{\pi_B^2 x^2 - \pi_G^2 (L+x)^2}
\]
where $L=\Lps(G)-\Lps(B)$ and $x=\Lps(G)+\Lpu$. It can be easily showed that 
\[
\frac d{dx} \beta_1(x)=\pi_G^2\pi_B^2\frac{2Lx (X^2(\tau,G)-X^2(\tau,B))(x-L)}{(\pi_B^2 x^2 - \pi_G (L+x)^2)^2}
\]
which brings $\frac d{dx} \beta_1(x)>0$ for $x\geq 0$, with a singularity in 
\[
\Lpu^s=\frac{\pi_B}{\pi_G - \pi_B}(\Lps(G)-\Lps(B))-\Lps(B)
\]
The typical shape of $\beta_1$ is reported in Fig.~\ref{fig:betauno}.
We observe that $\beta_1(\Lpu=0)=\beta_2$. The asymptotic value for $\Lpu=\infty$ is
\[
\beta_1(\infty)=\frac 12 \frac{\pi_G^2X^2(\tau,B) - \pi_B^2X^2(\tau,G)}{\pi_G^2 - \pi_B^2}
\]
\fdp{It can be verified that $\beta_1(\Lpu)$ is injective. Hence, the above analysis let us state: $\beta_1(\infty) \leq \beta_1(0) = \beta_2$, which in turn leads to the following}
\begin{lemma}
For $0\leq \Lpu < \Lpu^s$, it holds $\beta_1 \geq \beta_2$, and for $\Lpu > \Lpu^s$ 
it holds $\beta_1 < \beta_2$.
\end{lemma}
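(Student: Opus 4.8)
The plan is to read off the result from the global shape of $\beta_1$ regarded as a function of the single variable $\Lpu$, using the monotonicity already established together with the location of its pole. Writing $x=\Lps(G)+\Lpu$, so that $x$ is an affine increasing function of $\Lpu$, the derivative computed above gives $\frac{d}{dx}\beta_1(x)>0$ for every $x\geq 0$ except at the value corresponding to $\Lpu^s$, where the denominator $\pi_B^2 x^2-\pi_G^2(L+x)^2$ vanishes. Hence $\beta_1$, viewed as a function of $\Lpu$, is strictly increasing on each of the two intervals $[0,\Lpu^s)$ and $(\Lpu^s,\infty)$. The boundary data I would use are $\beta_1(0)=\beta_2$, the explicit limit $\beta_1(\infty)$ displayed above, and the inequality $\beta_1(\infty)\leq\beta_2$ that follows from injectivity.

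Next I would pin down the behaviour of $\beta_1$ at the pole $\Lpu^s$. Since $\beta_1$ is a rational function of $\Lpu$ with a simple zero of the denominator at $\Lpu^s$ (and the numerator does not vanish there, so the singularity is genuine, matching the vertical asymptote in Fig.~\ref{fig:betauno}), its one-sided limits there are necessarily $\pm\infty$; monotonicity then fixes the signs with no residue computation. Indeed, on $[0,\Lpu^s)$ the function increases starting from the finite value $\beta_2$, so its left limit at $\Lpu^s$ is the supremum of its range and must be $+\infty$; symmetrically, on $(\Lpu^s,\infty)$ it increases up to the finite value $\beta_1(\infty)$, so its right limit at $\Lpu^s$ is the infimum of its range and must be $-\infty$. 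Consequently $\beta_1$ maps $[0,\Lpu^s)$ onto $[\beta_2,+\infty)$ and $(\Lpu^s,\infty)$ onto $(-\infty,\beta_1(\infty))$.

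The conclusion then follows by comparison on each branch. For $0\leq\Lpu<\Lpu^s$, monotonicity gives $\beta_1(\Lpu)\geq\beta_1(0)=\beta_2$, which is the first claim. For $\Lpu>\Lpu^s$, monotonicity gives $\beta_1(\Lpu)<\lim_{\Lpu'\to\infty}\beta_1(\Lpu')=\beta_1(\infty)$, and combining with $\beta_1(\infty)\leq\beta_2$ yields $\beta_1(\Lpu)<\beta_2$, the second claim. Note that the two range intervals $[\beta_2,+\infty)$ and $(-\infty,\beta_1(\infty))$ being disjoint is precisely the injectivity statement, so it is natural to invoke injectivity here to secure $\beta_1(\infty)\leq\beta_2$ rather than to verify that inequality by a separate direct computation.

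The step I expect to be delicate is justifying the two one-sided limits at the pole, i.e., that $\beta_1\to+\infty$ from the left and $\beta_1\to-\infty$ from the right; the clean way around an explicit residue sign is the monotonicity argument above, noting that an increasing function cannot approach a finite endpoint at a pole. Two boundary situations should also be flagged: the degenerate case $\frac{\pi_G}{\L(G)}=\frac{\pi_B}{\L(B)}$, which is exactly $\Lpu=\Lpu^s$ and is excluded from the statement (there $\beta_1$ is undefined and the earlier analysis gives $\beta^*(\alpha)=\alpha$); and the case $\Lpu^s\leq 0$, where the interval $[0,\Lpu^s)$ is empty and only the right-branch argument is needed. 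Throughout, the standing assumptions $\Lps(G)>\Lps(B)$ and $X(\tau,G)\geq X(\tau,B)$ are what make every factor in the derivative positive, and hence underpin the monotonicity on which the whole argument rests.
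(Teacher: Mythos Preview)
Your argument is correct and follows essentially the same route as the paper: the paper derives the lemma directly from the monotonicity of $\beta_1(\Lpu)$ on each side of the singularity $\Lpu^s$, the boundary value $\beta_1(0)=\beta_2$, and the inequality $\beta_1(\infty)\leq\beta_2$ obtained from injectivity. You reproduce this line of reasoning, adding a careful justification of the one-sided limits at the pole and flagging the degenerate cases $\Lpu=\Lpu^s$ and $\Lpu^s\leq 0$, which the paper leaves implicit.
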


Now we can combine the conditions above to derive:
\begin{thm}\label{thm:threshold}
%Let $W_s$ be the set of symmetric Wardrop equilibria, and 
Let $I=[0,\beta_{\tau,B}]$\\
{\noindent i.} If $\Lpu > \Lpu^s$, then 
\[
W_s=[\beta_1,\beta_2]\cap I
\]
is the set of symmetric Wardrop equilibria for the system. 

{\noindent ii.} If $\Lpu < \Lpu^s$ then $W_s \subseteq \{0, \beta_{\tau,B}\}$. %\frac 12 X^2(\tau,B)\}$.

%{\noindent iii.} If $\beta_2<0$, then $W_s=\emptyset$.
\end{thm}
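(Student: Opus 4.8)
The plan is to read a symmetric Wardrop equilibrium as a fixed point of the best-response map: $\alpha\in I=[0,\beta_{\tau,B}]$ is an equilibrium precisely when $\alpha$ maximizes $U(\alpha,\cdot)$ over $I$, i.e. when playing $\alpha$ is a best reply to the population strategy $\alpha$. The material developed before the two preceding case-lemmas already supplies every ingredient: on $[\alpha,\beta_{\tau,B}]$ the utility follows the $\beta\geq\alpha$ expression, whose derivative was shown to be decreasing, so this branch is concave with unconstrained maximizer $\beta_1$; on $[0,\alpha]$ the utility follows the $\beta<\alpha$ expression, again concave, with unconstrained maximizer $\beta_2$. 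First I would record that the two branches agree at $\beta=\alpha$ (there $t_\beta(\theta)\to t_\alpha(\theta)$), so that $U(\alpha,\cdot)$ is a continuous function on $I$, concave on each of the two subintervals that meet at $\alpha$.

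The key reduction is that $\alpha$ is a global maximizer of $U(\alpha,\cdot)$ on $I$ if and only if it is simultaneously the constrained maximizer on $[\alpha,\beta_{\tau,B}]$ and on $[0,\alpha]$: indeed the global maximum over $I$ is the larger of the two branch maxima, and since each subinterval contains $\alpha$, a strictly better point on either branch rules $\alpha$ out, whereas if $\alpha$ tops both branches it tops $I$. By concavity, $\alpha$ maximizes the right branch iff that branch is nonincreasing at $\alpha$, i.e. iff $\beta_1\leq\alpha$, and $\alpha$ maximizes the left branch iff that branch is nondecreasing up to $\alpha$, i.e. iff $\beta_2\geq\alpha$. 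Hence for every interior $\alpha\in(0,\beta_{\tau,B})$ one has $\alpha\in W_s$ exactly when $\beta_1\leq\alpha\leq\beta_2$; at the endpoints one of the two subintervals degenerates to a point, so only the adjacent one-sided condition is required ($\beta_1\leq 0$ at $\alpha=0$, respectively $\beta_2\geq\beta_{\tau,B}$ at $\alpha=\beta_{\tau,B}$).

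It then remains to feed in the preceding lemma on the sign of $\beta_1-\beta_2$. For part i, $\Lpu>\Lpu^s$ gives $\beta_1<\beta_2$, so the two-sided condition $\beta_1\leq\alpha\leq\beta_2$ describes a genuine interval; intersecting with $I$ and adjoining the endpoints through their one-sided conditions yields exactly $W_s=[\beta_1,\beta_2]\cap I$. For part ii, $\Lpu<\Lpu^s$ gives $\beta_1\geq\beta_2$, so no interior $\alpha$ can meet $\beta_1\leq\alpha\leq\beta_2$ simultaneously; the only strategies that can survive are those at which one of the two conditions drops out, namely $\alpha=0$ and $\alpha=\beta_{\tau,B}$, whence $W_s\subseteq\{0,\beta_{\tau,B}\}$.

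The main obstacle I expect is the careful bookkeeping at the junction $\beta=\alpha$ and at the endpoints of $I$, rather than any new estimate: one must confirm continuity of $U(\alpha,\cdot)$ across the two regimes and that the constrained branch maximizers really are the clamped values of $\beta_1$ and $\beta_2$ catalogued in the two case-lemmas, and then verify that the degeneracy of one subinterval at $\alpha=0$ and $\alpha=\beta_{\tau,B}$ is precisely what leaves the boundary points as candidates in part ii without producing spurious interior equilibria. Everything else follows directly from the concavity of each branch combined with the sign of $\beta_1-\beta_2$ provided by the lemma.
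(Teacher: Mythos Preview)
Your proposal is correct and follows essentially the same route as the paper: characterize $\alpha\in W_s$ by the two one-sided best-response conditions $\beta_1\le\alpha$ and $\beta_2\ge\alpha$ coming from the right and left branches, then invoke the preceding lemma on the sign of $\beta_1-\beta_2$ to distinguish the interval case from the extremal case. The paper's proof is terser and omits the continuity and endpoint bookkeeping you flag, but the underlying argument is the same.
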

\begin{proof}
\fdp{Case i. follows immediately observing that for $\beta_1\leq \alpha$ the best response is $\beta^*(\alpha)=\alpha$
and for $\beta_2 \geq \alpha$ the best response is $\beta^*(\alpha)=\alpha$: both conditions are satisfied simultaneously 
for $\alpha \geq 0$ if and only if $\alpha \in W_s$.\\
Case ii. is proved observing that the conditions for case i fail, so that only extremal cases can hold. In particular, $W_s$ is 
not always the empty set: if $\beta_2 \geq 0$, then $\beta_1\leq 0$ so that $\beta^*(0)=0$ and the same holds in the opposite case, i.e., if  $\beta_1 \geq \alpha=\beta_{\tau,B}$ then $\beta_1 \geq \alpha=\beta_{\tau,B}$ so 
that $\beta^*(\beta_{\tau,B})=\beta_{\tau,B}$.}\end{proof}

% We notice that in the case when 
% \begin{equation}\label{eq:}
% \frac{X(\tau,G)}{\frac{\pi_G}{\Lps(G)} }> \frac{X(\tau,B)}{\frac{\pi_B}{\Lps(B)}}
% \end{equation}
% no symmetric Wardrop equilibrium is possible. Hence, observe that $\frac{X(\tau,\theta)}{\Lps(\theta)}$ is as the {\em viewcount slope}, i.e., the average increment of the viewcount of a good content in $[0,\tau]$. 

% \begin{cor}\label{eq:cor1} 
% No Wardrop  equilibrium is possible in the case the believed viewcount slope for good contents is smaller than the believed viewcount slope for bad contents 
% \end{cor}

\fdp{The result in Thm.~\ref{thm:threshold} let us observe a neat phase transition effect on $\Lpu$: when the intensity of the views due to the pull mechanism is below threshold $\Lpu^s$, only extremal Wardrop equilibria are possible. Above that threshold, there can exist a continuum of equilibria where the system can settle.  
%Observe that the condition $\beta_2\geq 0$ corresponds to 
%\[
%\frac{X(\tau,G)}{X(\tau,B)} \leq \frac {\frac{\pi_G}{\Lps(G)}}{\frac{\pi_B}{\Lps(B)}}
%\]
%Finally, we can observe also that $\Lpu^s$ depends linearly on the difference of the believes $\pi_G-\pi_B$, so that 
%conditions exist when $\Lpu^s<0$: however, those cases bring in turn $\beta_2<0$, which is case iii.
Let $\mu(\cdot)$ denote the standard real measure: a sufficient condition is provided in the following
\begin{cor}
$\mu(W_s)>0$ if  $\Lpu>\Lpu^s$ and $\beta_2\geq 0 >\beta_1$.  
\end{cor}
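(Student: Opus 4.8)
The plan is to read $\mu(W_s)>0$ directly off the explicit description of $W_s$ supplied by Theorem~\ref{thm:threshold}, combined with the sign information assembled immediately before it. First I would invoke part i.\ of Theorem~\ref{thm:threshold}: since the hypothesis of the corollary includes $\Lpu>\Lpu^s$, the set of symmetric Wardrop equilibria is exactly $W_s=[\beta_1,\beta_2]\cap I$ with $I=[0,\beta_{\tau,B}]$. The Lemma preceding the theorem guarantees that in this regime $\beta_1<\beta_2$, so that $[\beta_1,\beta_2]$ is a genuine, nondegenerate interval rather than a single point or the empty set; this nondegeneracy is precisely what makes the intersection capable of carrying positive Lebesgue measure.

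Next I would locate the endpoints of the intersection. Intersecting the two closed intervals gives $W_s=[\max(\beta_1,0),\,\min(\beta_2,\beta_{\tau,B})]$. The hypothesis $\beta_2\geq 0>\beta_1$ fixes the left endpoint: because $\beta_1<0$ we have $\max(\beta_1,0)=0$, so $0\in W_s$ and the intersection is nonempty. It then remains only to control the right endpoint $\min(\beta_2,\beta_{\tau,B})$, since $\mu(W_s)=\min(\beta_2,\beta_{\tau,B})-0$ whenever the left endpoint equals $0$.

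The crux — and the only place where a little care is needed — is to verify $\min(\beta_2,\beta_{\tau,B})>0$, for then $W_s=[0,\min(\beta_2,\beta_{\tau,B})]$ has length $\min(\beta_2,\beta_{\tau,B})>0$ and the claim follows. Positivity of $\beta_{\tau,B}$ is immediate from its role as the maximal threshold $\beta_{\max}$, i.e.\ the value corresponding to access at $t=0$, which is strictly positive. For the other factor I would read the hypothesis $\beta_2\geq 0$ in its generic form $\beta_2>0$: the borderline value $\beta_2=0$ collapses $[\beta_1,\beta_2]\cap I$ to the singleton $\{0\}$ of measure zero, and is the sole degenerate case to be excluded. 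I expect the main obstacle to be a clean handling of this boundary rather than any substantial computation; once it is dispatched, combining $\beta_2>0$ with $\beta_{\tau,B}>0$ yields $\min(\beta_2,\beta_{\tau,B})>0$ and hence $\mu(W_s)>0$.
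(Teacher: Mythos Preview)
Your approach is correct and matches the paper's treatment: the paper gives no explicit proof of this corollary, presenting it as an immediate consequence of Theorem~\ref{thm:threshold}~i., which is exactly the reading you supply. Your identification of the degenerate boundary $\beta_2=0$ is a valid observation---under the hypothesis as literally stated ($\beta_2\geq 0$ rather than $\beta_2>0$) the intersection collapses to $\{0\}$ and $\mu(W_s)=0$, so the corollary tacitly requires the strict inequality; the paper does not address this edge case.
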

We can observe that $\pi_G < \pi_B$ implies $\beta_2\geq 0$ and $\Lpu>0>\Lpu^s$, 
so that a stronger sufficient condition than the one just provided in turn becomes: $\pi_G < \pi_B$ and $\beta_1\leq \beta_{\tau,B}$.
%This is a different structure compared to what we found before, and it suggests that a larger uptake of pull users mitigates the polarization towards extremal type of equilibria.
}

\section{Related Works}\label{sec:rel}

%%%%%%%%%%%%%%%%%%%%%%%%%%%%%%%%%%%%%%%%%%%%%%%%%%%%%%%%%%%%%%%%%%%%%%
%%%%%%%%%%%%%%%%%%%%%%%%%%%%%%%%%%%%%%%%%%%%%%%%%%%%%%%%%%%%%%%%%%%%%%%%%%%%%%%%%%%%%%%%%%%%%%%%%%%%%%%%%%%%%%%%%%%%%%%%

% Online popularity
The analysis of dynamics of popularity of online contents has been subject of 
recent papers. The work \cite{Gill07youtubetraffic} provides an analysis of the YouTube system,
with comprehensive view of the characteristic of the generated traffic. 

In \cite{ChatFirstStep} the authors address the relation between metrics used 
to evaluate popularity. They observed that viewcount is strongly correlated with 
 several such metrics as number of comments, ratings, or favorites. However, all 
such metrics do not correlate to average rating. In this paper we confine our analysis 
to viewcount as the metric of interest. \cite{SzaboPop} focuses on the core problem of predicting 
popularity, namely, the viewcount, based on early measurements of user access. Based 
on YouTube videos or Digg stories measurements, the authors observe that contents 
increasing fast their viewcount in early stages typically become popular later on. 
The proposed empirical model, i.e., $\log N(t_r)=\log N(t_o)+\lambda_0(t_r,t_0)$ where $\lambda_0(t_r,t_0)$,  
is a random multiplicative noise and $N(t_r),N(t_0)$ is the viewcount at $t_r$ and $t_0$; it
resembles closely the exponential model adopted in this work. 

In \cite{RatkiewiczBurstyPoP} the authors propose a model accounting for change of ranking 
induced by UGC online platforms. The model is meant to overcome the limitations of the preferential 
attachment models. Those models in fact cannot explain bursty growth of content popularity; 
 those in turn are claimed an inherent property of the online platforms. The authors relate 
 bursty growth spikes to the way such systems expose popular contents to users and perform 
re-ranking of existing contents causing positive feedback loops.

The paper \cite{ChaTON} provides analysis of power law  behavior for the rank distribution 
of contents; the distribution of most watched videos is found heavily skewed towards 
the most popular ones.

Threshold models similar to those studied in this work are described by Granovetter~\cite{GranovetterAJS1978} 
in social science. The assumption is that individuals make binary decisions (in our framework, view or not view a content), 
according to some static internal threshold of others participating. A generalization based on threshold 
distribution is addressed in \cite{RolfeSocNet}.

%%%%%%%%%%%%%%%%%%%%%%%%%%%%%%%%%%%%%%%%%%%%%%%%%%%%%%%%%%%%%%%%%%%%%%%%%%%%%%%%%%%%%%%%%%%%%%%%%%%%%%%%%%%%%%%%%%%%%%%%
%%%%%%%%%%%%%%%%%%%%%%%%%%%%%%%%%%%%%%%%%%%%%%%%%%%%%%%%%%%%%%%%%%%%%%
\section{Conclusions}\label{sec:concl}
%%%%%%%%%%%%%%%%%%%%%%%%%%%%%%%%%%%%%%%%%%%%%%%%%%%%%%%%%%%%%%%%%%%%%%
%%%%%%%%%%%%%%%%%%%%%%%%%%%%%%%%%%%%%%%%%%%%%%%%%%%%%%%%%%%%%%%%%%%%%%%%%%%%%%%%%%%%%%%%%%%%%%%%%%%%%%%%%%%%%%%%%%%%%%%%

In this paper we characterized the access to online contents 
by game theoretical means by leveraging on the concept of Wardrop equilibrium. We deduced the 
structure of equilibria in systems where users adopt threshold type policies to select online contents. We explored several cases: 
 the case when the plain viewcount is the metric, or the viewcount trend, or both are combined as a product metric. We explored 
the case of a fixed time horizon dictated by the content lifetime, and we considered a case when 
the time horizon is not fixed. Finally we explored the impact of side information available 
to users. 

In all such cases we deduced the presence of a continuum of equilibria, which has potential implications in the 
design and control of platforms for online content access. In future work, in particular, we are exploring the dynamics associated 
to such sets of interior restpoints, when they exist, and comparing those with typical dynamics of online contents. 
However, not only equilibria are relevant: as showed in \cite{GraSoongJEBO1986}, threshold strategies,  
under specific conditions, may well lead the system to be asymptotically unstable; system trajectories may in turn consist of cycles that 
can move into a chaotic dynamics, essentially indistinguishable from random noise. 

%%%%%%%%%%%%%%%%%%%%%%%%%%%%%%%%%%%%%%%%%%%%%%%%%%%%%%%%%%%%%%%%%%%%%%%%%%%%%%%%%%%%%%%%%%%%%%%%%%%%%%%%%%%%%%%%%%%%%%%%
% bibliography
%%%%%%%%%%%%%%%%%%%%%%%%%%%%%%%%%%%%%%%%%%%%%%%%%%%%%%%%%%%%%%%%%%%%%%%%%%%%%%%%%%%%%%%%%%%%%%%%%%%%%%%%%%%%%%%%%%%%%%%%

\bibliographystyle{IEEEtran}	% (uses file "plain.bst")
\bibliography{utube}		% expects file "myrefs.bib"

\end{document}